\documentclass[a4paper,UKenglish,cleveref, autoref, thm-restate]{lipics-v2021}

\pdfoutput=1 %
\hideLIPIcs  %

\newcommand{\PSAC}{\textsc{PSAC}}
\newcommand{\PSACdefault}{\textsc{PSAC-default}}
\newcommand{\PSACfast}{\textsc{PSAC-fast}}
\newcommand{\DCX}{\textsc{DCX}}
\newcommand{\SA}{\textsc{SA}}

\newcommand{\CC}{\textsc{CC}}
\newcommand{\Wiki}{\textsc{Wiki}}
\newcommand{\DNA}{\textsc{DNA}}
\newcommand{\divop}{\;\text{div}\;}
\newcommand{\modop}{\;\text{mod}\;}

\usepackage{dsfont}
\newcommand{\doublestroke}[1]{\mathbb{#1}}

\usepackage{physics}
\usepackage{booktabs}
\usepackage{cleveref}
\usepackage[linesnumbered,ruled,vlined]{algorithm2e}

\usepackage[inline]{enumitem}

\usepackage{tikz}
\usepackage{tikz-cd}
\usepackage{pgfplots}
\usepackage{graphicx}
\pgfplotsset{compat=1.14}

\tikzset{
  double color fill/.code 2 args={
    \pgfdeclareverticalshading[%
    tikz@axis@top,tikz@axis@middle,tikz@axis@bottom%
    ]{diagonalfill}{100bp}{%
      color(0bp)=(tikz@axis@bottom);
      color(50bp)=(tikz@axis@bottom);
      color(50bp)=(tikz@axis@middle);
      color(50bp)=(tikz@axis@top);
      color(100bp)=(tikz@axis@top)
    }
    \tikzset{shade, left color=#1, right color=#2, shading=diagonalfill}
  }
}

\usepackage{xcolor}
\definecolor{my-dark-red}{RGB}{183, 28, 28}
\definecolor{my-red}{RGB}{244,67,54}
\definecolor{my-pink}{RGB}{233,30,99}
\definecolor{my-purple}{RGB}{156,39,176}
\definecolor{my-deep-purple}{RGB}{103,58,183}
\definecolor{my-indigo}{RGB}{63,81,181}
\definecolor{my-blue}{RGB}{33,150,243}
\definecolor{my-light-blue}{RGB}{3,169,244}
\definecolor{my-cyan}{RGB}{0,188,212}
\definecolor{my-teal}{RGB}{0,150,136}
\definecolor{my-green}{RGB}{76,175,80}
\definecolor{my-light-green}{RGB}{139,195,74}
\definecolor{my-lime}{RGB}{205,220,57}
\definecolor{my-yellow}{RGB}{255,235,59}
\definecolor{my-amber}{RGB}{255,193,7}
\definecolor{my-orange}{RGB}{255,152,0}
\definecolor{my-deep-orange}{RGB}{255,87,34}
\definecolor{my-brown}{RGB}{121,85,72}
\definecolor{my-grey}{RGB}{158,158,158}
\definecolor{my-blue-grey}{RGB}{96,125,139}
\definecolor{my-lipics-grey}{rgb}{0.6,0.6,0.61}

\bibliographystyle{plainurl}%

\title{Fast and Lightweight Distributed Suffix Array Construction -- First Results}

\author{Manuel Haag}{Karlsruhe Institute of Technology, Germany}{uozeb@student.kit.edu}{}{}
\author{Florian Kurpicz}{Karlsruhe Institute of Technology, Germany}{kurpicz@kit.edu}{https://orcid.org/0000-0002-2379-9455}{}
\author{Peter Sanders}{Karlsruhe Institute of Technology, Germany}{sanders@kit.edu}{https://orcid.org/0000-0003-3330-9349}{}
\author{Matthias Schimek}{Karlsruhe Institute of Technology, Germany}{schimek@kit.edu}{https://orcid.org/0009-0002-6402-9016}{}

\authorrunning{M. Haag, F. Kurpicz, P. Sanders, and M. Schimek} %

\Copyright{Jane Open Access and Joan R. Public} %

\ccsdesc[100]{Theory of computation~Distributed algorithms} %

\keywords{Distributed Computing, Suffix Array Construction} %

\category{} %

\relatedversion{} %

\nolinenumbers %

\EventEditors{John Q. Open and Joan R. Access}
\EventNoEds{2}
\EventLongTitle{42nd Conference on Very Important Topics (CVIT 2016)}
\EventShortTitle{CVIT 2016}
\EventAcronym{CVIT}
\EventYear{2016}
\EventDate{December 24--27, 2016}
\EventLocation{Little Whinging, United Kingdom}
\EventLogo{}
\SeriesVolume{42}
\ArticleNo{23}

\begin{document}

\maketitle

\begin{abstract}
    We present first algorithmic ideas for a practical and lightweight adaption of the DCX suffix array construction algorithm [Sanders~et~al.,~2003] to the distributed-memory setting.
    Our approach relies on a bucketing technique which enables a lightweight implementation which uses less than half of the memory required by the currently fastest distributed-memory suffix array algorithm PSAC [Flick~and~Aluru,~2015] while being competitive or even faster in terms of running time.
\end{abstract}

\section{Introduction}
The suffix array~\cite{DBLP:conf/soda/ManberM90,GonnetBS1992PatArray} is one of the most well-studied text indices.
Given a text \(T\) of length \(n\), the suffix array \SA\ simply lists the order of the lexicographically sorted suffixes, i.e., for all \(1\leq i\leq j\leq n\) we have \(T[\SA[i]..n]\leq T[\SA[j]..n]\).
To compute the suffix array, we have to (implicitly) sort all suffixes of the text.
Therefore, the task of constructing the suffix array is sometimes referred to as suffix sorting.
Even though, we have to consider all suffixes of the text, whose total length is quadratic in the size of the input, suffix arrays can be constructed in linear time requiring only constant working space in addition to the space for the suffix array~\cite{Goto2019OptSACA,LiLH18OptSACA}.

Despite their simplicity, suffix arrays have numerous applications in pattern matching and text compression~\cite{Ohlebusch2013Bioinformatics}.
They are a very powerful full-text index and are used as a space-efficient replacement~\cite{AbouelhodaKO2004EnhancedSA} of the suffix tree---one of the most powerful full-text indices.
Furthermore, suffix arrays can be used to compute the Burrows-Wheeler transform~\cite{BurrowsW1994BWT}, which is the backbone of many compressed full-text indices~\cite{FerraginaM2000FMIndex,GagieNP2020FullyFunctionalRIndex}.

In today's information age, the amount of textual data that has to be processed is ever-increasing with no sign of slowing down.
For example, the English Wikipedia contains around 60 million pages and grows by around 2.5 million pages each year.\footnote{\url{https://en.wikipedia.org/wiki/Wikipedia:Size_of_Wikipedia}, last accessed 2024-12-11.}
A snapshot of all public source code repositories created by over 100 million developers\footnote{\url{https://github.blog/news-insights/company-news/100-million-developers-and-counting/}, last accessed 2024-12-11.} on GitHub requires more than 21\,TB to store\footnote{\url{https://archiveprogram.github.com/arctic-vault/}, last accessed 2024-12-11.}.
Furthermore, the capability to sequence genomic data is increasing exponentially, due to technical advances~\cite{Stephens2015BigDataGenomical}.
All these examples show the importance of scaling algorithms for the analysis of textual information many of which use the suffix array as building block.

One possible solution to tackle this problem is to use distributed algorithms.
In the distributed-memory setting, we can utilize many processing elements (PEs) that are connected via a network, e.g., high-performance clusters or cloud computing.
In this setting, the main obstacle when computing suffix arrays is the immense amount of working memory required by the current state-of-the-art algorithms.
Even carefully engineered implementations require around \(30\times\)--\(60\times\) the input size as working space~\cite{DBLP:conf/alenex/0001K19,DBLP:conf/sc/FlickA15}.
Additionally, there is a significant space-time trade-off.
The memory-efficient algorithms tend to be slower.
We thus ask the question:

\begin{center}
\em
Is there a scaling, fast, \emph{and} memory-efficient suffix array construction algorithm in distributed memory?
\end{center}

\subparagraph*{Structure of this Paper.}
First, in \cref{sec:preliminaries}, we introduce some basic concepts required for suffix array construction and distributed-memory algorithms.
\Cref{sec:related-work} discusses previous work on suffix array construction.

In \cref{sec:distributed-dcx}, we start with a description of the distributed-memory variant of the DCX~\cite{DBLP:journals/jacm/KarkkainenSB06} suffix array construction algorithm.
In \cref{sec:bucketing}, we demonstrate how our previously developed technique for space-efficient string sorting \cite{PascalStringSorting, DBLP:conf/esa/KurpiczM0S24} can be applied to the DCX suffix sorting to obtain a more lightweight algorithm.
Subsequently, we introduce a \emph{randomized chunking scheme} to provide provable load-balancing guarantees of our space-efficient (suffix) sorting approach.
As a side-result of independent interest, in \cref{sec:external-memory}, we briefly describe how the algorithm can be extended to the distributed external-memory model.
Finally, preliminary results of a first prototypical implementation of our ideas using MPI are presented in \Cref{sec:evaluation} followed by a brief outline of our future work in \Cref{sec:conclusion}.

\subparagraph*{Summary of our Contributions.}
The main contributions that we are presenting in this paper are
\begin{enumerate*}[label=(\roman*)]
\item a scaling, fast, \emph{and} space-efficient distributed-memory suffix array construction algorithm, using
\item a new randomized chunking scheme for load balancing, that
\item can also be applied to other (distributed) models of computation and algorithms.
\end{enumerate*}

\section{Preliminaries}
\label{sec:preliminaries}

We assume a distributed-memory machine model consisting of $p$ processing elements (PEs) allowing single-ported point-to-point communication.
The cost of exchanging a message of $h$ \emph{machine words} between any two PEs is $\alpha + \beta h$, where $\alpha$ accounts for the message start-up overhead and $\beta$ quantifies the time to exchange one machine word.
\begin{table}[t]
	\centering
	\caption[List of used symbols]{Symbols used in this paper.}
	\label{tab:symbols_and_abbreviations}
	\begin{tabular}{ll}
		\multicolumn{2}{c}{\textbf{Symbols}}            \\
		\toprule
		$ p$       & number of processing elements (PEs)      \\
		$ n$       & total length of the input text      \\
		$ \sigma$  & size of the alphabet             \\
                $ \$$      & sentinel character with $\$ < c$ for $c \in \Sigma$ \\
                $ X$       & size of the difference cover in the DCX algorithm \cite{DBLP:journals/jacm/KarkkainenSB06}            \\
		$ q$       & number of buckets      \\
		$ c$       & size of a chunk      \\
		\bottomrule
	\end{tabular}
\end{table}
The input to our algorithms is a text $T$ consisting of $n-1$ characters over an alphabet $\Sigma$.
By $T[i]$, we denote the $i$-th character of $T$ for $0 \le i < n -1$.
We assume $T[n]$ to be a sentinel character $\$ \notin \Sigma$ with $\$ < z$ for all $z \in \Sigma$.
The $i$-th suffix of $T$, $s_i = T[i, n]$, is the substring starting at the $i$-th character of $T$.
Due to the sentinel element all suffixes are prefix free.
The \emph{suffix array} SA contains the lexicographical ordering of all suffixes of $T$. More precisely,
\SA{} is an array of length $n$ with $\SA[i]$ containing the index of the $i$-th smallest suffix of $T$.
A length-$l$- (or simply $l$)-prefix of a suffix with starting position $i$ is the substring $T[i,l)$.

In our distributed setting, we assume that each PE $i$ obtains a local subarray $T_i$ of $T$ as input such that $T$ is the concatenation of all local input arrays $T_i$.
Furthermore, we assume the input to be well-balanced, i.e., $\abs{T_i} = \Theta(n/p)$.
For our DCX algorithm, we assume a suitable padding of up to $X$ sentinel characters at the end of the text.

\section{Related Work}
\label{sec:related-work}
There has been extensive research on the construction of suffix arrays  in the sequential, external-memory, shared-memory parallel, and (to a somewhat lesser extent) also in the distributed-memory setting.
All suffix array construction algorithms are based on three general algorithmic techniques:
\begin{enumerate*}[label=(\arabic*)]
\item prefix doubling,
\item induced copying, and
\item recursion.
\end{enumerate*}

In the following, we give a brief overview of these techniques.
Since---to the best of our knowledge---all external, shared, and distributed-memory algorithms have a sequential counterpart, we focus on the sequential and distributed-memory algorithms.
See \cref{fig:timeline_suffix_array_construction} for the evolution of sequential suffix array construction algorithms.
For a more comprehensive overview, we refer to the most recent surveys on suffix array construction~\cite{DBLP:phd/dnb/Bingmann18,DBLP:books/sp/22/BingmannD0KO022}.

\begin{figure}[t]
  \centering
       \begin{tikzpicture}[transform shape]
        \draw[ultra thick,rounded corners=5pt,draw=my-blue-grey,fill=my-grey!25]
        (0,-3.05) rectangle ++(2.25,12.2);
        \draw[ultra thick,rounded corners=5pt,draw=my-blue-grey,fill=my-grey!25]
        (2.2775,-3.05) rectangle ++(4.25,12.2);
        \draw[ultra thick,rounded corners=5pt,draw=my-blue-grey,fill=my-grey!25]
        (6.555,-3.05) rectangle ++(4.25,12.2);
          
         (.25,-.375) rectangle ++(4.2,1);

        \path[rounded corners=5pt, fill=my-green, ] (-1, 8.0) rectangle ++(2.9, .95);
        \node at (-.5, 8.475) {1990};
        \path[rounded corners=5pt, fill=my-green!75, ] (-1, 7.0) rectangle ++(6.5, .95);
        \node at (-.5, 7.475) {1999};

        \path[rounded corners=5pt, fill=my-green!50, ] (-1, 6.4) rectangle ++(3.9, .55);
        \path[rounded corners=5pt,fill=my-grey!25] (2.4, 6.7) rectangle ++(1.45, .95);
        \path[rounded corners=5pt,fill=my-grey!25] (2.35, 6.95) rectangle ++(1.55, .75);
        \path[fill=my-grey!25] (2.35, 6.95) rectangle ++(1.55, .25);
        \path[rounded corners=5pt,fill=my-green!50, ] (2.4, 6.7) rectangle ++(1.45, .95);
        \node at (-.5, 6.675) {2000};

        \path[rounded corners=5pt, fill=my-blue, ] (-1, 5.65) rectangle ++(5, .7);
        \path[rounded corners=5pt,fill=my-grey!25] (2.95, 5.65) rectangle ++(1.95, .95);
        \path[rounded corners=5pt, fill=my-blue, ] (2.95, 5.65) rectangle ++(1.95, 1);
        \node at (-.5, 6) {2002};

        \path[rounded corners=5pt, fill=my-blue!75, ] (-1, 5.05) rectangle ++(1.5, .55);
        \path[fill=my-grey!25] (.25, 5.05) rectangle ++(.25, .55);
        \path[fill=my-blue!75, ] (.25, 5.05) rectangle ++(5.8, .55);
        \path[rounded corners=5pt,fill=my-grey!25] (5.65, 5.35) rectangle ++(.5, 1.7);
        \path[rounded corners=5pt,fill=my-blue!75, ] (5.65, 5.35) rectangle ++(4.95, 1.75);
        \node at (-.5, 5.325) {2003};

        \path[rounded corners=5pt, fill=my-blue!50, ] (-1, 4.45) rectangle ++(1.5, .55);
        \path[rounded corners=5pt,fill=my-grey!25] (4.85, 4.4) rectangle ++(1.4, .95);
        \path[rounded corners=5pt,fill=my-blue!50, ] (4.9, 4.45) rectangle ++(5.7, .85);
        \path[fill=my-grey!25] (.25, 4.75) rectangle ++(.25, .25);
        \path[fill=my-grey!25] (4.25, 4.75) rectangle ++(1.25, .25);
        \path[fill=my-blue!50, ] (.25, 4.75) rectangle ++(5.7, .25);
        \node at (-.5, 4.725) {2004};

        \path[rounded corners=5pt, fill=my-deep-orange, ] (-1, 3.85) rectangle ++(1.55, .55);
        \path[rounded corners=5pt, fill=my-grey!25] (1.1, 4.15) rectangle ++(.25, .5);
        \path[rounded corners=5pt, fill=my-deep-orange, ] (.55, 4.15) rectangle ++(4.3, .55);
        \path[fill=my-grey!25] (.25, 4.15) rectangle ++(.5, .25);
        \path[fill=my-deep-orange, ] (.25, 4.15) rectangle ++(.5, .25);
        \path[rounded corners=5pt,fill=my-grey!25] (7.35, 3.9) rectangle ++(1.55, 1);
        \path[fill=my-grey!25](4.6,4.15) rectangle ++(.5,.25);
        \path[fill=my-grey!25](4.1,4.15) rectangle ++(.5,.25);
        \path[fill=my-deep-orange, ] (4.1, 3.9) rectangle ++(3.5, .5);
        \path[rounded corners=5pt,fill=my-grey!25] (7.4, 3.9) rectangle ++(1.45, .95);
        \path[rounded corners=5pt,fill=my-deep-orange, ] (7.4, 3.9) rectangle ++(1.45, .95);
        \node at (-.5, 4.125) {2005};

        \path[rounded corners=5pt, fill=my-deep-orange!75, ] (-1, 3.25) rectangle ++(1.5, .55);
        \path[rounded corners=5pt, fill=my-deep-orange!75, ] (.6, 3.6) rectangle ++(2.25, .5);
        \path[rounded corners=5pt,fill=my-grey!25] (2.35, 3.55) rectangle ++(1.8, 1.05);
        \path[rounded corners=5pt,fill=my-deep-orange!75, ] (2.4, 3.7) rectangle ++(1.7, .85);
        \path[fill=my-grey!25] (.25, 3.55) rectangle ++(2.5, .25);
        \path[fill=my-deep-orange!75, ] (.25, 3.55) rectangle ++(2.5, .25);
        \path[fill=my-grey!25] (2.35, 3.6) rectangle ++(.5, .5);
        \path[fill=my-deep-orange!75, ] (2.35, 3.6) rectangle ++(.5, .5);
        \path[rounded corners=5pt,fill=my-grey!25] (2.55, 3.6) rectangle ++(1.55, .5);
        \path[rounded corners=5pt,fill=my-deep-orange!75, ] (2.55, 3.6) rectangle ++(1.55, .5);
        \node at (-.5, 3.525) {2006};

        \path[fill=my-grey!25] (.5, 3.35) rectangle ++(1.95, .25);
        \path[rounded corners=5pt,fill=my-grey!25] (.55, 2.7) rectangle ++(3.4, .9);
        \path[rounded corners=5pt,fill=my-deep-orange!50, ] (.55, 2.7) rectangle ++(7.65, .85);
        \path[rounded corners=5pt, fill=my-deep-orange!50, ] (-1, 2.7) rectangle ++(1.95, .5);
        \node at (-.5, 2.925) {2007};

        \path[rounded corners=5pt, fill=my-light-green, ] (-1, 2.05) rectangle ++(11.6, .6);
        \path[rounded corners=5pt,fill=my-grey!25] (8.65, 2.25) rectangle ++(1.95, 1.15);
        \path[rounded corners=5pt,fill=my-light-green, ] (8.65, 2.25) rectangle ++(1.95, 1.1);
        \node at (-.5, 2.325) {2008};

        \path[rounded corners=5pt, fill=my-light-green!75, ] (-1, 1.45) rectangle ++(8.6, .55);
        \path[rounded corners=5pt,fill=my-grey!25] (5.45, 1.75) rectangle ++(2.35, .825);
        \path[rounded corners=5pt,fill=my-light-green!75, ] (5.5, 1.75) rectangle ++(2.25, .8);
        \path[rounded corners=5pt, fill=my-grey!25] (7.1, 1.45) rectangle ++(.5, .55);
        \path[rounded corners=5pt, fill=my-light-green!75, ] (7.1, 1.45) rectangle ++(.65, .65);
        \path[rounded corners=5pt, fill=my-grey!25] (5.4, 1.45) rectangle ++(.5, .55);
        \path[rounded corners=5pt, fill=my-light-green!75, ] (5.4, 1.45) rectangle ++(.5, .55);
        \node at (-.5, 1.725) {2009};

        \path[rounded corners=5pt, fill=my-light-green!50, ] (-1, .65) rectangle ++(8.55, .75);
        \path[rounded corners=5pt,fill=my-grey!25] (5.65, .7) rectangle ++(1.95, .875);
        \path[rounded corners=5pt,fill=my-light-green!50, ] (5.7, .7) rectangle ++(1.85, .85);
        \path[rounded corners=5pt, fill=my-grey!25] (5.45, .65) rectangle ++(.55, .75);
        \path[rounded corners=5pt, fill=my-light-green!50, ] (5.45, .65) rectangle ++(.55, .75);
        \node at (-.5, 1.025) {2011};

        \path[rounded corners=5pt,fill=my-cyan, ] (-1, -.35) rectangle ++(8.35, .95);
        \node at (-.5, .125) {2016};

        \path[rounded corners=5pt,fill=my-cyan!75, ] (-1, -1.35) rectangle ++(8.85, .95);
        \node at (-.5, -.9) {2017};

        \path[rounded corners=5pt,fill=my-cyan!50, ] (-1, -2.35) rectangle ++(8.35, .95);
        \node at (-.5, -1.9) {2021};
        
        \draw[ultra thick,rounded corners=5pt,draw=my-blue-grey]
        (0,-3.05) rectangle ++(2.25,12.2);
        \draw[ultra thick,rounded corners=5pt,draw=my-blue-grey]
        (2.2775,-3.05) rectangle ++(4.25,12.2);
        \draw[ultra thick,rounded corners=5pt,draw=my-blue-grey]
        (6.555,-3.05) rectangle ++(4.25,12.2);

        \node at (1.125,-2.7) {%
          \normalsize\begin{tabular}{l}\textbf{Prefix}\\[-.145cm]\textbf{Doubling}\end{tabular}};
        \node at (4,-2.7) {%
          \normalsize\begin{tabular}{l}~\\[-.145cm]\textbf{Induced Copying}\end{tabular}};
        \node at (8.75,-2.7) {%
          \normalsize\begin{tabular}{l}~\\[-.145cm]\textbf{Recursion}\end{tabular}};

        \draw[rounded corners=5pt,fill=my-grey!15]
        (0.5, 8.1) rectangle ++(1.25, .75) node[pos=.5] (MM) {%
          \scriptsize \begin{tabular}{c}\cite{DBLP:conf/soda/ManberM90}\\original\end{tabular}};
        \draw[rounded corners=5pt,double color fill={my-grey!15}{my-brown!75},shading angle=45]
        (0.5, 7.1) rectangle ++(1.25, .75) node[pos=.5] (LS) {%
          \scriptsize \begin{tabular}{c}\cite{LarssonS2007FasterSuffixSorting}\\qsufsort\end{tabular}};

        \path[-latex, draw] (1.125,8.1) -- (1.125,7.85);
        \path[draw] (1.125,7.1) edge[out=270,in=90,-latex] (2.2675,3.5);

        \draw[rounded corners=5pt,double color fill={my-grey!15}{my-brown!75},shading angle=45]
        (1.75, 2.75) rectangle ++(1, .75) node[pos=.5] {%
          \scriptsize \begin{tabular}{c}\cite{SchurmannS2007IncomplexSACA}\\bpr\end{tabular}};

        \draw[dashed, rounded corners=5pt,fill=my-grey!15]
        (3.75, 8.1) rectangle ++(1.25, .75) node[pos=.5] {%
          \scriptsize \begin{tabular}{c}\cite{BurrowsW1994BWT}\\BWT\end{tabular}};
        \draw[rounded corners=5pt,fill=my-grey!15]
        (2.5, 6.8) rectangle ++(1.25, .75) node[pos=.5] {%
          \scriptsize \begin{tabular}{c}\cite{Seward2000BWTSACA}\\1/2 copy\end{tabular}};
        \draw[rounded corners=5pt,fill=my-grey!15]
        (4, 7.1) rectangle ++(1.35, .75) node[pos=.5] {%
          \scriptsize \begin{tabular}{c}\cite{ItohT1999ABCopySACA}\\A/B copy\end{tabular}};
        \draw[rounded corners=5pt,double color fill={my-grey!15}{my-brown!75},shading angle=45]
        (3.05, 5.75) rectangle ++(1.75, .75) node[pos=.5] {%
          \scriptsize \begin{tabular}{c}\cite{ManziniF2004Lightweight}\\deep-shallow\end{tabular}};
        \draw[rounded corners=5pt,double color fill={my-grey!15}{my-brown!75},shading angle=45]
        (5, 4.5) rectangle ++(1, .75) node[pos=.5] {%
          \scriptsize \begin{tabular}{c}\cite{Manzini2004}\\chains\end{tabular}};
        \draw[rounded corners=5pt,double color fill={my-grey!15}{my-brown!75},shading angle=45]
        (2.5, 3.75) rectangle ++(1.5, .75) node[pos=.5] {%
          \scriptsize \begin{tabular}{c}\cite{Mori2015libdivsufsort}\\DivSufSort\end{tabular}};
        \draw[rounded corners=5pt,fill=my-grey!15]
        (4.5, 2.75) rectangle ++(1.75, .75) node[pos=.5] {%
          \scriptsize \begin{tabular}{c}\cite{ManiscalcoP2007CacheAwareSACA}\\cache aware\end{tabular}};

        \path[draw] (3.75, 8.5) -- (3.375,8.5) edge[out=180,in=90,-] (3.125,8.25);
        \path[-latex,draw] (3.125,8.25) -- (3.125,7.55);

        \path[-latex,draw] (4.675, 8.1) -- (4.675,7.85);

        \path[draw] (5, 8.5) -- (6.155, 8.5) edge[out=0,in=90,-] (6.375,8.25);
        \path[-latex,draw] (6.375,8.25) -- (6.375,7);

        \path[draw] (3,6.8) edge[out=270,in=90,-latex] (2.2675,3.5);
        \path[-latex,draw] (3.4,6.8) -- (3.4,6.5);

        \path[draw] (4.9, 7.1) -- (4.9, 5.75) edge[out=270, in=90,-latex] (3.75, 4.5);
        \path[draw,-latex] (3.4,5.75) -- (3.4, 4.5);

        \path[draw] (5, 8.25) -- (5.25,8.25) edge[out=0, in=90,-] (5.5,8);
        \path[-latex,draw] (5.5,8) -- (5.5,5.25);

        \path[-latex,draw] (5.5,4.5) -- (5.5,3.5);

        \path[draw] (5.125, 7.1) -- (5.125, 6) edge[out=270, in=180,-] (5.375, 5.775);
        \path[-latex,draw] (5.375, 5.775) -- (6.25,5.775);

        \path[draw] (3.4,3.75) -- (3.4,2.325) edge[out=270, in=180,-] (3.65, 2.075);
        \path[-latex,draw] (3.65, 2.075) -- (5.55,2.075);

        \draw[rounded corners=5pt,fill=my-grey!15]
        (5.75, 6.25) rectangle ++(1.25, .75) node[pos=.5] {%
          \scriptsize \begin{tabular}{c}\cite{BurkhardtK2003DiffCover}\\diffcover\end{tabular}};
        \draw[rounded corners=5pt,double color fill={my-blue-grey!75}{my-brown!75},shading angle=45]
        (5.55, 1.75) rectangle ++(2.15, .75) node[pos=.5] {%
          \scriptsize \begin{tabular}{c}\cite{Mori2008SAIS,NongZH2011SADS}\\SAIS/SADS\end{tabular}};
        \draw[rounded corners=5pt,double color fill={my-blue-grey!75}{my-brown!75},shading angle=45]
        (5.75, .75) rectangle ++(1.75, .75) node[pos=.5] {%
          \scriptsize \begin{tabular}{c}\cite{Nong2013ConstSpaceSACA}\\SACA-K\end{tabular}};

        \draw[rounded corners=5pt,fill=my-blue-grey!75]
        (5.5, -.25) rectangle ++(1.75, .75) node[pos=.5] {%
          \scriptsize \begin{tabular}{c}\cite{LiLH2022ConstSpaceSACA}\\$O(1)$ space\end{tabular}};

        \draw[rounded corners=5pt,fill=my-blue-grey!75]
        (6, -1.25) rectangle ++(1.75, .75) node[pos=.5] {%
          \scriptsize \begin{tabular}{c}\cite{Goto2019ConstSpaceSACA}\\$O(1)$ space\end{tabular}};

        \path[-latex,draw] (7,6.8) -- (7.5,6.8);
        \path[-latex,draw] (7,.75) -- (7,.5);
        \path[-latex,draw] (7.325,.75) -- (7.325,-.5);

        \draw[dashed, rounded corners=5pt,fill=my-grey!15]
        (8, 8.1) rectangle ++(1.45, .75) node[pos=.5] {%
          \scriptsize \begin{tabular}{c}\cite{Farach1997OptSuffixTree}\\$O(n)$ tree\end{tabular}};

        \draw[rounded corners=5pt,double color fill={my-blue-grey!75}{my-brown!75}, shading angle=45]
        (7.5, 6.25) rectangle ++(1.25, .75) node[pos=.5] {%
          \scriptsize \begin{tabular}{c}\cite{DBLP:journals/jacm/KarkkainenSB06}\\DC3\end{tabular}};
        \draw[rounded corners=5pt,fill=my-blue-grey!75]
        (9, 6.25) rectangle ++(1.5, .75) node[pos=.5] {%
          \scriptsize \begin{tabular}{c}\cite{KimSPP2005LinearTimeSACA}\\mod2 split\end{tabular}};

        \draw[rounded corners=5pt,fill=my-blue-grey!75]
        (8.35, 5.4) rectangle ++(1.25, .75) node[pos=.5] {%
          \scriptsize \begin{tabular}{c}\cite{Hon2009Mod2SACA}\\mod2\end{tabular}};

        \draw[rounded corners=5pt,fill=my-blue-grey!75]
        (6.25, 5.4) rectangle ++(1.25, .75) node[pos=.5] {%
          \scriptsize \begin{tabular}{c}\cite{KoA2005SpaceEfficientSACA}\\L/S split\end{tabular}};

        \draw[rounded corners=5pt,fill=my-blue-grey!75]
        (7.5, 4) rectangle ++(1.25, .75) node[pos=.5] {%
          \scriptsize \begin{tabular}{c}\cite{Na2005}\\succinct\end{tabular}};
        \draw[rounded corners=5pt,fill=my-blue-grey!75]
        (9.125, 4.5) rectangle ++(1.25, .75) node[pos=.5] {%
          \scriptsize \begin{tabular}{c}\cite{KimJP2004FixedSizeSACA}\\fixed $\Sigma$\end{tabular}};

        \draw[rounded corners=5pt,fill=my-blue-grey!75]
        (6.35, 2.75) rectangle ++(1.75, .75) node[pos=.5] {%
          \scriptsize \begin{tabular}{c}\cite{NongZ2007OptimalConstantAlphabetSACA}\\$O(n\lg\vert\Sigma\vert)$\end{tabular}};
        \draw[rounded corners=5pt,fill=my-blue-grey!75]
        (8.75, 2.5) rectangle ++(1.75, .75) node[pos=.5] {%
          \scriptsize \begin{tabular}{c}\cite{AdjerohN2010SFESACA}\\SFE-coding\end{tabular}};

        \path[draw] (7,5.4) -- (7,4.5) edge[out=270, in=90,-latex] (6, 3.5);
        \path[-latex,draw] (7.225, 5.4) -- (7.225,3.5);

        \path[-latex,draw] (7,2.75) -- (7,2.5);
        \path[-latex,draw] (7,1.75) -- (7,1.5);

        \path[draw] (8,8.5) -- (7.475,8.5) edge[out=180, in=90] (7.225,8.25);
        \path[-latex,draw] (7.225,8.25) -- (7.225,6.15);

        \path[draw] (8.5,8.1) edge[out=270, in=90,-latex] (8.125,7);
        \path[draw,-latex] (8.875,8.1) -- (8.875,6.15);
        \path[draw] (8.95,8.1) edge[out=270, in=90,-latex] (9.75,7);

        \path[-latex,draw] (8.125,6.25) -- (8.125,4.75);
        \path[-latex,draw] (9.75,6.25) -- (9.75,5.25);

        \path[draw] (8.25,6.25) -- (8.25,5.25) edge[out=270, in=180] (8.5,5);
        \path[draw] (8.5,5) -- (8.65,5) edge[out=0, in=90] (8.9,4.75);

        \path[draw] (8.9,4.5) -- (8.9, 3.6) edge[out=270, in=0] (8.65,3.35);
        \path[-latex,draw] (8.65,3.35) -- (8.1,3.35);

        \path[draw] (8.9,4.75) -- (8.9, 4.25) edge[out=270, in=180] (9.15,4);
        \path[draw] (9.15,4) -- (9.375,4) edge[out=0, in=90] (9.625,3.75);
        \path[-latex,draw] (9.625,3.75) -- (9.625,3.25);

        \draw[rounded corners=5pt,double color fill={my-blue-grey!75}{my-brown!75},shading angle=45]
        (2.4, -.25) rectangle ++(1.75, .75) node[pos=.5] {%
          \scriptsize \begin{tabular}{c}\cite{Baier2016GSACA}\\GSACA\end{tabular}};

        \path[draw] (4,4.15) edge[out=0, in=90] (4.25,3.9);
        \path[draw] (4.25, 3.9) -- (4.25, -1.5);
        \path[draw] (4.25, -1.5) edge[out=270, in=180] (4.5, -1.75);
        \path[-latex,draw] (4.5,-1.75) -- (5.5,-1.75);

        \path[draw] (4.5, 1.8) -- (5.65, 1.8);
        \path[draw] (4.5, 1.8) edge[out=180, in=90] (4.25,1.55);

        \path[draw] (4.5, 4.75) -- (5, 4.75);
        \path[draw] (4.5, 4.75) edge[out=180, in=90] (4.25,4.5);
        \path[draw] (4.25, 4.5) -- (4.25, 3.9);

        \draw[rounded corners=5pt,double color fill={my-blue-grey!75}{my-brown!75},shading angle=45]
        (5.5, -2.25) rectangle ++(1.75, .75) node[pos=.5] {%
          \scriptsize \begin{tabular}{c}\cite{Grebnov2021libsais}\\libSAIS\end{tabular}};
        
      \end{tikzpicture}
      \caption{Timeline of \emph{sequential} suffix array construction with algorithms that share techniques are marked with an arrow. Figure based on~\cite{DBLP:journals/csur/PuglisiST07,DBLP:phd/dnb/Bingmann18,DBLP:phd/dnb/Kurpicz20}. The three techniques are shown as columns and algorithms that combine multiple techniques are crossing the borders. Suffix array construction algorithms with linear running time are highlighted in dark gray. If an implementation is publicly available, the algorithm is also marked in brown.}
      \label{fig:timeline_suffix_array_construction}
\end{figure}
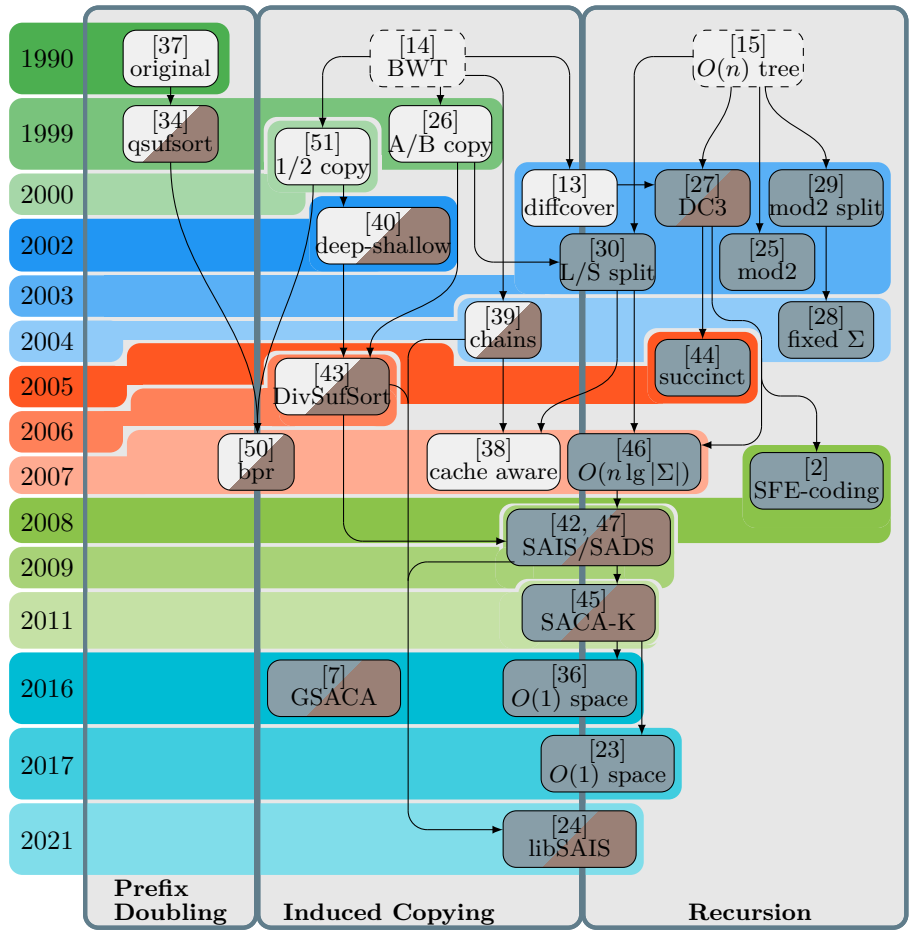

\subparagraph*{Prefix-Doubling.}
In algorithms based on prefix doubling, the suffixes are iteratively sorted by their lenght-$h$ prefix starting with $h=1$.
Now, all suffixes that share a common $h$-prefix are said to be in the same $h$-group and have an $h$-rank corresponding to the number of suffixes in lexicographically smaller $h$-groups.
By sorting all suffixes based on their $h$-group, we can compute the corresponding suffix array $\SA_h$.
Note that this suffix array does not necessarily have to be unique, as the order of suffixes within an $h$-group is not unique.
If for some $h$, all $h$-groups contain only a single suffix, i.e., the $h$-ranks of all suffixes are unique, then we have $\SA_h=\SA$.
Therefore, the idea is to increase $h$ until, all $h$-ranks are unique.
To this end, during each iteration, the length of the considered prefixes is doubled.
Fortunately, we do not have to compare the prefixes explicitly.
Instead, during iteration $i > 0$, for a suffix starting at index $j$ the rank by its length-$h$ prefix can be inferred by sorting the ranks $(\mathrm{rank}_{h/2}[j], \mathrm{rank}_{h/2}[j+h/2])$ of the suffixes $j$ and $j + h/2$ computed in the previous iteration.
Using the overlap of suffixes in a text, prefix-doubling boils down to at most $\mathcal{O}(\log{n})$ rounds in which $n$ pairs of integers have to be sorted.
Thus, this approach has an overall complexity in $\mathcal{O}(n \log{n})$ in the sequential setting, when using integer sorting algorithms.
A prefix doubling algorithm is the original suffix array construction algorithm~\cite{DBLP:conf/soda/ManberM90}.
However, in the sequential setting, this approach has not received much attention, due to its inherent non-linear running time.
However, in distributed memory, the fastest currently known suffix array construction algorithm is based on prefix doubling~\cite{DBLP:conf/sc/FlickA15}.

\subparagraph*{Induced-Copying.}
Induced-copying algorithms sort a (small) subset of suffixes and then \emph{induce} the order of all other suffixes using the subset of sorted suffixes.
First, all suffixes are classified using one of two~\cite{ItohT1999ABCopySACA,NongZH2011SADS} classification schemes.
Here, all suffixes that have to be manually sorted are in a special class.
Then, the classification allows us to induce all non-special suffixes based on their class, starting characters, and preceeding or succeeding special-class suffix.
The inducing part of these algorithms usually consists of just two scans of the text, where for each position only one or two characters have to be compared.
Combined with a recursive approach, induced copying algorithms can compute the suffix array in linear time requiring only constant  working space in addition to the space for the suffix array~\cite{Goto2019OptSACA,LiLH18OptSACA}.
This combination is also very successful, as it is used by the fastest sequential suffix array construction algorithms~\cite{BahneBBBFFGKLMP2019SACABench,FischerK2017DismantlingDivSufSort,Grebnov2021libsais,Mori2015libdivsufsort}
Interestingly, there is only one linear time suffix array construction algorithm based on induced copying that does not also rely on recursion~\cite{Baier2016GSACA}.
In distributed memory, induced copying algorithms are space-efficient~\cite{DBLP:conf/alenex/0001K19}.

\subparagraph*{Recursive Algorithms.}
The third and final technique is to use recursion to solve subproblems of ever decreasing sizes.
Here, the general idea is to partition the input into multiple different (potentially overlapping) substrings.
A subset of these substrings can then be sorted using an integer sorting algorithm (in linear time).
If all substrings are unique, we can compute a suffix array together with the remaining suffixes not yet sorted.
Otherwise, we recurse on the non-unique ranks of the substrings as new input.
We can then use the suffix array from the recursive problem to compute the unique ranks from the original subset of substrings.
The first linear time suffix array construction algorithm is purely based on recursion~\cite{DBLP:journals/jacm/KarkkainenSB06}.
This algorithm is also the foundation of the distributed-memory suffix array construction algorithm presented in this paper.
It already has been considered in distributed memory~\cite{bingmann2018pdcx,DBLP:conf/bigdataconf/BingmannGK18}.
However, all implementations are straightforward transformations of the sequential algorithm to distributed memory.
We also want to mention that all but one~\cite{Baier2016GSACA} suffix array construction algorithm with linear running time at least partially utilizes this recursive principle of solving a smaller subproblem using the same algorithm.

\section{A Space-Efficient Variant of Distributed DCX}
\label{sec:distributed-dcx}
In this section, describe the general idea of the sequential DC3 algorithm~\cite{DBLP:journals/jacm/KarkkainenSB06}.
Then, we describe a canonical transformation of the sequential DC3 algorithm to a distributed-memory algorithm.
Here, we also consider the more general form---the DCX algorithm.
Finally, we discuss how to optimize this canonical transformation to a scaling, fast, and memory-efficient distributed suffix array construction algorithm.

\subsection{The Sequential DCX Algorithm}
\label{sec:sequential-dcx}
The \emph{skew} or \textbf{D}ifference \textbf{C}over3 algorithm (DC3) -- and its generalization \DCX{} -- is a recursive suffix array construction algorithm which exhibits linear running time (in the sequential setting).
As we propose a fast and lightweight distributed variant of this algorithm as our main contribution, we briefly discuss its main ideas.
The \DCX{} algorithm uses so-called difference covers to partition the suffixes of the input text into $X$ different sets.
A difference cover $D_{X}$ modulo $X$ is a subset of $[0, X)$ such that for all $i,j \in \mathbb{N}$, there is a $0 \le l < X$ with $(i + l) \modop X \in D_X$ and $(j + l) \modop X \in D_X$.
Put differently, $[0, X) = \{i-j \modop X \mid i,j \in D_X\}$.
$X = 3$ is the smallest $X$ for which a non-trivial difference cover exists with $D_3 = \{1,2\}$.

Suffixes with index $(j \modop X) \in D_X$ constitute the \emph{(difference-cover) sample}.
For now, let us assume that we already know a relative ordering of the sample suffixes within the final suffix array.
For any two suffixes $s_i$ and $s_j$, there is an $l < X$ such that $(i+l)$ and $(j+l)$ are indices of sample suffixes.
Hence, for lexicographically comparing $s_i$ and $s_j$ it is sufficient to compare the pairs $(T[i,i+l), \mathrm{rank}[i+l])$ and $(T[j,j+l), \mathrm{rank}[j+l])$.
For $X=3$, this rank-\emph{inducing} can be achieved using linear-time integer sorting and merging.
A relative ordering of the samples can be computed by sorting their $X$-prefixes, replacing them with the rank of their prefix, and recursively applying the algorithm to this auxiliary text (see \Cref{sec:distributed-dcx} for more details).
For DC3, the number of sample suffixes is $\le 2/3n$, and as all other operations can be achieved with work linear in the size of the input, the overall complexity of the DC3 algorithm is also in $\mathcal{O}(n)$.

It remains to discuss how a relative ordering of the sample suffixes is determined.
If the $X$-prefixes of the sample suffixes are unique, we are already done, as we can take their rank for the ordering.
Otherwise, we replace the sample suffixes $j$ with the rank of their $X$-prefix and order them by $(j \modop X, j \divop X)$.
Recursively apply the algorithm to this text $T'$ yields a suffix array $SA'$ from which we can retrieve a relative ordering of the sample suffix with regard to the original text $T$.

\subsection{The Distributed DCX Algorithm}
\label{sec:distributed-dcx}

Our distributed suffix array construction is a simple and practical distributed variant of the \DCX{} algorithm for $X \ge 3$.
\Cref{algo:ddcx} shows a high-level pseudocode for the algorithm.

\newlength{\lencomment}
\setlength{\lencomment}{2.1in}
\begin{algorithm}[t]
\DontPrintSemicolon
\SetAlgoNlRelativeSize{0}
\SetKwFunction{RadixSort}{RadixSort}
\SetKwFunction{PrefixSum}{PrefixSum}
\SetKwFunction{PrefixSum}{PrefixSum}
\SetKwFunction{RecursiveDCThree}{RecursiveDC3}
\SetKwFunction{ComputeSuffixArray}{ComputeSuffixArray}
\SetKwFunction{Sort}{Sort}
\SetKwFunction{RetrieveRank}{RetrieveRank}
\SetKwFunction{Rank}{rank}
\SetKwFunction{Origin}{origin}

\KwIn{Text $T_i$ on PE $i$.}
\KwOut{Local Chunk of distributed Suffix Array of $T$.}

    $o_i = \PrefixSum(\abs{T_i})$    \tcp*{ \parbox[t]{\lencomment}{global text index offset}}
    $C_i = \langle 0 \le j < \abs{T_i} \mid (j + o_i \modop X) \in D_X\rangle$ \tcp*{\parbox[t]{\lencomment}{difference cover sample}}
    $S_i = \langle (\underbrace{T_i[j,j + X)}_{X\textnormal{-prefix}}, \underbrace{j + o_i}_{\mathrm{global\ idx}}) \mid j + o_i \in C_i \rangle$ \tcp*{\parbox[t]{\lencomment}{\raggedright (prefix,idx)-pair of dif\-fer\-ence cover sample suffixes}}
    globally sort $S_i$ by first entry\;
    
    \If{all first entries of $S$ are unique} {
	\For{$t = (\textrm{prefix},j) \in S_i$} {
	send $(j, \Rank(t, S))$ to PE $\Origin{j}$
	}
	store received rank data in $R_i$\;%
    }\Else{
	$P_i = \langle (R_i[j], j) \mid   j \in C_i \rangle$ \tcp*{\parbox[t]{\lencomment}{replace $X$-prefix of sample suffixes with rank}}
	globally sort $P_i$ by $(j \modop X, j \divop X)$\;
	recursively call DCX on $T_i' =\langle r \mid (r,j) \in P_i \rangle$\;
	\For{$j \in C_i$} {
	    retrieve (unique) rank of $j$ from suffix array of $T'$ and store in $R_i$\;
    	}
    }
    \For{$0 \le k < X$} {
	construct $S_i^k = \langle(T_i[j, \dots, j+X), R_i[j+k_1], \dots, R_i[j+k_v], j + o_i)) \mid 0 \le j < \abs{T_i}\rangle$\;
    }
    
    globally sort $S_i = S_i^0 \cup \dots \cup S_i^k$ by appropriate comparison function (see \cite{DBLP:journals/jacm/KarkkainenSB06})\;
    output last entry of $S_i$ as suffix array $SA_i$

\caption{High-level overview of a simple distributed variant of the DCX algorithm.}
    \label{algo:ddcx}
\end{algorithm}

We now discuss the algorithm in some more detail.
The input to the algorithm on PE $i$ is the local chunk $T_i$ of the input text $T$.

\begin{enumerate}
    \item \textbf{Sort the Difference Cover Sample}
	In the first phase of the algorithm, we select on each PE $i$ the suffixes starting at (global) positions $j$ with $(j \modop X) \in D_X$.
	These suffixes constitute the so-called \emph{difference cover sample}.
	The main idea of the algorithm is to compute the ranks of these suffixes first.
        For that we globally sort the $X$-prefixes of all suffixes within the difference cover sample.
        If all of them are unique, this already constitutes the relative ordering of the sample suffixes within the final suffix array.
        This rank information can then be used to rank any two suffixes $s_i$ and $s_j$ (see \Cref{sec:sequential-dcx} and the following step three for details in the distributed setting).
        Otherwise we have to recurse on the sample suffixes as described in the following step of the algorithm.

    \item \textbf{Compute Unique Ranks Recursively:}
	If the ranks are not already unique, we locally create an array $P_i$ by replacing each entry $(X-\mathrm{prefix}, j)$ of the sorted sample suffix array $S_i$ with $(\mathrm{rank}[j], j)$, i.e., we replace each sample suffix with its previously computed rank.
	Afterwards, we globally sort $P_i$ by $(j\modop X, j\divop X)$.
	This rearranges the newly renamed sample suffixes in their original order by respecting the equivalence class of their starting index within $D_X$.
	We then recursively call the DCX algorithm on the text $T_i'$ where $T_i'$ simply contains the new names of the sample suffixes from $P_i$ dropping the index.
	From the suffix array of $T_i'$, we can easily determine the rank of each sample suffix $j$.
	Due to the construction of $T'$, the ranks of the sample suffixes correspond to their relative order in $T$.

    \item \textbf{Sort All Suffixes:}
	Now, we construct for each $0 \le k < X$ a set $S_i^k$ containing the $X$-prefixes of the suffixes $(j \modop X) = k$, $\abs{D_X}$ ranks and the index $j$.
	Sorting these sets $S_i^k$ altogether using the previously discussed comparison function for suffixes $s_i$, $s_j$ yields the suffix array of the original text $T$.

	Note that in the original work in the sequential setting, the sets $S_i^k$ are not sorted altogether but individually and later merged to ensure work in $\mathcal{O}(\abs{D_X}n)$.
\end{enumerate}

Existing implementations of distributed DC3(/DC7/DC13) \cite{DBLP:journals/pc/KullaS07, bingmann2018pdcx} broadly follow this scheme which is a straightforward adaption of the sequential algorithm to the distributed setting.
However, this approach is not particularly space-efficient.
Materializing the $X$-prefixes of the (non-)sample suffixes and sorting (or merging) them results in a memory blow-up proportional to $X$ compared to the actual input.
Consequently, sorting suffixes on real distributed machines using DCX with large $X$ does not seem feasible due to the limited main memory, even though DCX with $X > 3$ has a better performance on many real-world inputs \cite{DBLP:conf/alenex/0001K19}.

In the following \Cref{sec:bucketing}, we propose a technique to overcome this problem.

\subsubsection{Bucketing}
\label{sec:bucketing}
In the sequential or shared-memory parallel setting, $X$-prefixes of suffixes can be sorted space-efficiently as each such element $e$ can be represented as a pointer
to the starting position of the suffix within the input text.
This space-efficient sorting, however, is no longer possible in distributed memory.
If we want to globally sort a distributed array of suffix-prefixes, we have to fully materialize and exchange them -- resulting in a memory blow-up of at least a factor $X$.
A simple idea to prevent this blow-up is to use a partitioning strategy which divides the elements from the distributed array into multiple buckets using splitter elements and processes only one bucket at a time. 

In the following, we describe a general technique for space-efficient sorting which we proposed in our previous work on scalable distributed string sorting \cite{PascalStringSorting, DBLP:conf/esa/KurpiczM0S24}.
We use this generalized technique as a building block in our distributed variant of the \DCX{} algorithm.

Whenever a distributed array of elements with a space-efficient representation has to be globally sorted, we first determine $q$ global splitter elements $s_0, s_1, \dots s_{q-1}$ via sampling or multi-sequence selection with $s_0 = \infty$.
We then locally partition the array into $q$ buckets, such that element $e$ with $s_{k} < e \le s_{k+1}$ is placed in bucket $k$.
We then execute $q$ global sorting steps.
In each step $k$, we materialize and communicate the elements from bucket $k$ using a common distributed sorting algorithm.
Assuming that the splitters are chosen such that the global number of elements in each bucket is $n/q$ and the elements within each bucket are equally distributed among the PEs (see \Cref{sec:random-chunking} how this can be ensured), we only have to materialize $n/(pq)$ elements per bucket and PE instead of $n/p$ elements per PE when using only one sorting phase.

By choosing $q$ proportional to the memory blow-up caused by materializing an element, we can keep the overall space consumption of this distributed space-efficient sorting approach in $\mathcal{O}(n/p)$.

\subsubsection{Space-Efficient Randomization via Random Chunk Redistribution}
\label{sec:random-chunking}
The global number of elements per bucket can be balanced by judiciously choosing the splitter elements.
Using multi-sequence selection \cite{DBLP:conf/spaa/AxtmannBS015} one can obtain splitter elements balancing the global number of elements per bucket perfectly (up to rounding issues).
However, the number of elements per PE within a bucket can vary greatly depending on the input.
Assume an input which is already globally sorted with $q < p$ buckets.
In this setting, all $n/p$ elements located on the first PE have to be materialized when processing the first bucket.
This results in memory blow-up and poor load-balancing across the PEs.
Increasing the number of buckets $q$ can only address the memory consumption issue but does not help with load-balancing at all.

A standard technique to resolve this kind of problem is a random redistribution of the elements to be sorted.
However, this is not directly possible for elements which are stored in a space-efficient manner as in our case.

We propose to solve this problem by randomly redistributing not single prefixes of suffixes but whole chunks of the input text (together with some book-keeping information) before running the actual algorithm.

\begin{theorem}[Random Chunk Redistribution]
\label{thm:chunk-redistribution}
    When redistributing chunks of size $c$ uniformly at random across $p$ PEs, with $q$ buckets each containing $n/q$ elements, the expected number of elements from a single bucket received by a PE is $n/(pq)$.

    Furthermore, the probability that any PE receives $2n/(pq)$ or more elements from the same bucket is at most $1/p^\gamma$ for $n \ge 8c(\gamma+2) pq \ln(p)/3$ and $\gamma > 0$.

\end{theorem}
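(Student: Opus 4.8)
The plan is to fix one target PE and one bucket, write the number of elements of that bucket landing on that PE as a weighted sum of independent Bernoulli indicators over the chunks, read off the expectation directly, and then obtain the tail bound via a Bernstein-type inequality before closing with a union bound over all $p\cdot q$ (PE, bucket) pairs.

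\textbf{Setup and expectation.} There are $n/c$ chunks, each assigned independently and uniformly at random to one of the $p$ PEs. Fix a bucket $b$ holding $n/q$ elements and fix a target PE. For chunk $i$, let $a_i$ be the number of its elements that belong to $b$; since a chunk spans $c$ text positions we have $0 \le a_i \le c$, and summing over all chunks gives $\sum_i a_i = n/q$. Let $Y_i \in \{0,1\}$ indicate that chunk $i$ is sent to the fixed PE, so the $Y_i$ are independent with $\Pr[Y_i = 1] = 1/p$, and the number of bucket-$b$ elements arriving at that PE is $Z = \sum_i a_i Y_i$. Then $E[Z] = \frac{1}{p}\sum_i a_i = \frac{1}{p}\cdot\frac{n}{q} = \frac{n}{pq}$, which is the first claim.

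\textbf{Concentration.} For the tail I would apply Bernstein's inequality to $Z$, a sum of independent variables $a_i Y_i \in [0,c]$. The two inputs it needs are the maximum summand, bounded by $c$, and the variance $\mathrm{Var}(Z) = \frac{1}{p}\bigl(1-\frac{1}{p}\bigr)\sum_i a_i^2 \le \frac{1}{p}\sum_i a_i^2$. The crucial step is to control $\sum_i a_i^2$: using $a_i \le c$ gives $\sum_i a_i^2 \le c\sum_i a_i = cn/q$, hence $\mathrm{Var}(Z) \le \frac{cn}{pq} = c\mu$ with $\mu = n/(pq)$. Feeding $t = \mu$, variance $\le c\mu$, and summand bound $c$ into Bernstein's inequality yields $\Pr[Z \ge 2\mu] \le \exp\bigl(-\tfrac{\mu^2/2}{c\mu + c\mu/3}\bigr) = \exp\bigl(-\tfrac{3\mu}{8c}\bigr) = \exp\bigl(-\tfrac{3n}{8cpq}\bigr)$. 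Taking a union bound over all $pq$ (PE, bucket) pairs gives probability at most $pq\exp\bigl(-\tfrac{3n}{8cpq}\bigr)$ that some PE receives $2n/(pq)$ or more elements of a single bucket. Requiring this to be at most $p^{-\gamma}$ reduces to $\tfrac{3n}{8cpq} \ge \gamma\ln p + \ln(pq)$, and bounding $\ln(pq) \le 2\ln p$ (valid for $q \le p$) it suffices that $\tfrac{3n}{8cpq} \ge (\gamma+2)\ln p$, which rearranges exactly to the stated threshold $n \ge 8c(\gamma+2)pq\ln(p)/3$.

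\textbf{Main obstacle.} The part requiring care is the choice of concentration tool and the variance estimate. Because each chunk contributes a \emph{variable} number $a_i$ of elements to a bucket, $Z$ is a weighted sum rather than a plain balls-into-bins count, so a naive multiplicative Chernoff bound (after rescaling the summands by $c$) only produces a weaker constant; it is Bernstein's inequality together with the variance bound $\mathrm{Var}(Z) \le c\mu$ obtained from $\sum_i a_i^2 \le c\sum_i a_i$ that yields the sharp constant $8/3$. I would also verify that the intended model is independent uniform chunk assignment, so that the indicators $Y_i$ are genuinely independent; if instead an exactly balanced assignment were used, one would invoke negative association of the $Y_i$ to justify the same upper tail bound.
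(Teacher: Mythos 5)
Your proposal is correct and follows essentially the same route as the paper: the same decomposition of the per-PE, per-bucket count into chunk-indexed summands bounded by $c$, the same expectation computation, the same variance bound $\sum_i a_i^2 \le c \sum_i a_i = cn/q$ fed into Bernstein's inequality to get $\exp(-3n/(8pqc))$, and the same union bound over $pq$ pairs using $q \le p$. Your closing remark on verifying independence of the chunk assignments (and falling back to negative association for a balanced assignment) is a sensible caveat the paper does not spell out, but it does not change the argument.
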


\begin{proof}
    Let $Y_i^k$ denote the number of elements belonging to bucket $k$ which are assigned to PE $i$.
    In the following, we will determine the expected value of $Y_i^k$ and show that $\mathbb{P}[Y_i^k \ge 2\mathbb{E}[Y_i^k]]$ is small.
    This will then be used to derive the above-stated bounds.

    Let $c_j^k$ be the number of elements belonging to bucket $k$ in chunk $j$.
    For the sake of simplicity, we assume all buckets to be of equal size, thus, $\sum_{j=0}^{n/c-1} c_j^k = n / q$.
    We define
    \[
	X_{j,i}^{k} =
        \begin{cases} 
	    c_j^k & \text{if chunk } j \text{ is assigned to PE } i\\
            0 & \text{otherwise,}
        \end{cases}
    \]
    for chunk $j$ with $0 \le j < n/c$, PE $i$ with $0 \le i < p$, and bucket $k$ with $0 \le k < q$.
    Thus, the random variable $X_{j,i}^k$ indicates the number of elements from bucket $k$ received by PE $i$ if chunk $j$ is assigned to this PE.
    Hence, we can express $Y_i^k$ as the sum over all $X_{j,i}^k$,i.e., $Y_i^k = \sum_{j=0}^{n/c -1} X_{j,i}^k$.
    As all chunks are assigned uniformly at random and there are $p$ PEs, we furthermore have $\mathbb{E}[X_{j,i}^{k}] = c_j^k/p$.
    By the linearity of expectation, we can derive the expected value of $Y_{i}^{k}$ as

    \begin{align*}
	\mathbb{E}[Y_{i}^{k}] 
	= \mathbb{E}\left[\sum_{j=0}^{n/c -1} X_{j,i}^k\right]
	= \sum_{j=0}^{n/c -1} \mathbb{E}[X_{j,i}^k] 
	= \sum_{j=0}^{n/c -1} \frac{c_{j}^k}{p} 
	= \frac{n}{pq}.
	\label{eq:prepare-bernstein}
    \end{align*}

    For each bucket $k$, we now bound the probability $\mathbb{P}[Y_{i}^{k} \ge 2n/(pq)]$ that PE $i$ receives two times its expected number of elements or more.
    We have
    \begin{align}
	\mathbb{P}\left[Y_{i}^{k} \ge \frac{2n}{pq}\right] &= \mathbb{P}\left[\sum_{j=0}^{n/c -1} X_{j,i}^k \ge \frac{2n}{pq}\right] \notag \\
	&= \mathbb{P}\left[\sum_{j=0}^{n/c -1} X_{j,i}^k - \frac{n}{pq} \ge \frac{n}{pq}\right]
	= \mathbb{P}\left[\sum_{j=0}^{n/c -1} X_{j,i}^k - \mathbb{E}[X_{j,i}^k] \ge \frac{n}{pq}\right].
    \end{align}

    As the value of $X_{i,j}^k$ is bounded by the chunk size $c$, the Bernstein~inequality
    ~
    \cite[Theorem 2.10, Corollary 2.11]{DBLP:books/daglib/0035704} yields the following bound
    \begin{equation}
	\mathbb{P}\left[\sum_{j=0}^{n/c -1} X_{j,i}^k - \mathbb{E}[X_{j,i}^k] \ge \frac{n}{pq}\right] \le \exp\left(-\frac{\left(\frac{n}{pq}\right)^2}{2\left(\sum_{j=0}^{n/c-1} \mathbb{E}[(X_{j,i}^k)^2] + \frac{cn}{3pq}\right)}\right).
	\label{eq:bernstein}
    \end{equation} 
    Since we find $E[(X_{j,i}^k)^2] = (c_j^k)^2/p$, it follows that
    \[
      \sum_{j=0}^{n/c-1} \mathbb{E}[(X_{j,i}^k)^2] = \sum_{j=0}^{n/c-1} (c_j^k)^2/p \le \frac{1}{p} \sum_{j=0}^{n/(qc)-1} c^2 = \frac{cn}{pq},
    \]
    as the sum of the squares of a set of elements $0 \le a_i \le c$ with $\sum_i a_i = b$ and $b$ divisible by $c$  is maximized if they are distributed as unevenly as possible, i.e., $a_i = c$ for $b/c$ elements and $0$ for all others.
    We can use this estimation for an upper bound on the right-hand side of \eqref{eq:bernstein}
    \begin{equation}
	\exp\left(-\frac{\left(\frac{n}{pq}\right)^2}{2\left(\sum_{j=0}^{n/c-1} \mathbb{E}[(X_{j,i}^k)^2] + \frac{cn}{3pq}\right)}\right) \le
	\exp\left(-\frac{\left(\frac{n}{pq}\right)^2}{2\left(\frac{cn}{pq} + \frac{cn}{3pq}\right)}\right) =
	\exp\left(-\frac{3n}{8pqc}\right).
	\label{eq:bernstein-finalize}
    \end{equation}
    Combining these estimations, we obtain the bound
    \begin{equation*}
	\mathbb{P}\left[Y_{i}^{k} \ge \frac{2n}{pq}\right] \overset{\eqref{eq:prepare-bernstein}, \eqref{eq:bernstein-finalize}}{\le} \exp\left(-\frac{3n}{8pqc}\right) \le \exp\left(-(\gamma+2)\ln{p}\right) = \frac{1}{p^{\gamma + 2}}
    \end{equation*}
    for $n \ge 8pqc\ln(p)(\gamma + 2)/3$.

    Although the random variables $Y_i^k$ are dependent on each other, using the union-bound argument yields the following estimation
    \[
	\doublestroke{P}\left[ \bigcup Y_i^k \ge 2\frac{n}{pq} \right] \le \sum_{i=0}^{p-1} \sum_{k=0}^{q-1} \doublestroke{P}[Y_i^k \ge 2\frac{n}{pq}] \le \sum_{i=0}^{p-1} \sum_{k=0}^{q-1} \frac{1}{p^{\gamma+2}} \le \frac{1}{p^{\gamma}}.
    \]
    Hence,  we obtain $\frac{1}{p^\gamma}$ as an upper bound on the probability that any PE receives more than two times the expected number of elements $n/(pq)$  for any bucket when assuming $q \le p$.
\end{proof}

\Cref{thm:chunk-redistribution} shows that combining a random chunk redistribution with our bucketing approach yields a space-efficient solution to the sorting problems occurring within our distributed variant of the \DCX{} algorithm with provable performance guarantees.

Note that in the \DCX{} algorithm one can either perform a single redistribution at the beginning of each level or apply a random chunk redistribution before executing a space-efficient sorting via bucketing step.
Depending on the actual implementation, one does not only send chunks of the text but also corresponding rank entries and additional book-keeping information like the global index of a chunk.
Furthermore, each chunk should contain an \emph{overlap} of $X$ characters to ensure that an $X$-prefix for each element within a chunk can be constructed without communication.

\subsubsection{Further Optimizations}
\label{sec:optimizations}
In addition to the techniques described above, we also utilize discarding and packing, two techniques commonly used in distributed and external memory suffix array construction algorithms.

\subparagraph*{Discarding.}
    After sorting the $X$-prefixes of the sample suffixes, we have to recursively apply the DCX algorithm (or any other suffix sorting algorithm) to a smaller subproblem if there are duplicate ranks.
    However, in order to obtain overall unique ranks for the sample suffixes, we do not have to recurse on all of them  but can discard suffixes whose ranks are already unique after initial sorting.
    This so-called \emph{discarding} technique has been proposed for and has been implemented in the external memory setting~\cite{DBLP:journals/jea/DementievKMS08} but to the best of our knowledge has not been explored for distributed memory yet.
\subparagraph*{Packing}
    Packing is an optimization for small-sized alphabets proposed for distributed memory prefix-doubling by Flick et al.~\cite{DBLP:conf/sc/FlickA15}.
    Assume $b = \lceil\log{\sigma}\rceil < B$, where $B$ is the size of one machine word.
    Instead of using one machine word per character of the $X$-prefix, we can instead consider packing $\lfloor\frac{B X}{b}\rfloor$ characters into $X$ machine words  or use $X$ characters in only $\lceil\frac{Xb}{B}\rceil$ words.

\section{Extension to the Distributed External Memory Model}
\label{sec:external-memory}
    Our bucketing technique (together with the randomized chunking approach) can be adapted to the distributed external-memory model, where each PE has a main memory of size $M$ and additional external-memory (disk) storage from which blocks of size $B$ words can be read at a time.
    In the following, we assume that the input text $T_i$ and the corresponding suffix array to be computed are located in external memory on each PE $i$.
    Whenever we want to sort elements stored in external memory during the algorithm using $q$ buckets, we first scan blockwise through the text (or associated information) to construct a set of sample elements.
    These samples are then globally sorted and $q-1$ splitters are equidistantly drawn and communicated to all PEs.
    The splitters are kept in main memory.
    Afterwards, for processing each bucket $k < q$, we again scan the input text blockwise from disk and keep the elements belonging to bucket $k$ in memory.
    Note that by judiciously choosing the number of sample elements for splitter construction, we can enforce the number of elements belonging to a bucket to be in $\mathcal{O}(pM)$ with high probability.

    To ensure that the number of elements on each PE belonging to a bucket is in $\mathcal{O}(M)$, we can apply our randomized chunking technique.
    For this, we read $\mathcal{O}(M)$-sized parts of the input text into main memory at a time, apply the in-memory random chunking-based redistribution as described in \Cref{sec:random-chunking}, and write the received chunks to disk.

\section{Preliminary Implementation and Evaluation}
\label{sec:evaluation}

For a first preliminary evaluation, we use up to $128$ compute nodes of SuperMUC-NG, where each node is equipped with an Intel Skylake Xeon Platinum 8174 processor with 48 cores and 96GB of main memory.
The internal interconnect is a fast OmniPath network with 100 Gbit/s.

We use inputs from three different data sets:
\begin{itemize}
  \item \textbf{CommonCrawl (\CC{}).} This input consists of websites crawled by the Common Crawl Project.
We use the \emph{WET} files, which contain only the textual data of the crawled websites, i.\,e., no HTML tags.
Furthermore, we removed the meta information added by the Commoncrawl corpus.
We used the following WET files: \url{crawl-data/CC-MAIN-2019-09/segments/1550247479101.30/wet/CC-MAIN-20190215183319-20190215205319-#ID.warc.wet}, where \texttt{\#ID} is in the range from $00000$ to $000600$.
  \item \textbf{Wikipedia (\Wiki{}).} This file contains the XML data of all pages in the most current version of the Wikipedia, i.\,e., the files available at \url{https://dumps.wikimedia.org/#IDwiki/20190320/#IDwiki-20190320-}\url{pages-meta-current.xml.bz2}, where \texttt{\#ID} is \emph{de}, \emph{en}, \emph{es}, and \emph{fr}.
  \item \textbf{DNA data (\DNA{}).} Here, we extract the DNA data from FASTQ files provided by the \emph{1000 Genomes Project}.
We discarded all lines but the DNA data and cleaned it, such that it only contains the characters \texttt{A}, \texttt{C}, \texttt{G}, and \texttt{T}. (We simply removed all other characters.)
The original FASTQ files are available at \url{ftp://ftp.sra.ebi.ac.uk/vol1/fastq/DRR000/DRR#ID}, where \texttt{\#ID} is in the range from $000001$ to $000426\_1$.
\end{itemize}

For this evaluation, we compare our (preliminary) distributed DCX implementation (using X=21, with packing and discarding optimizations) with the current state-of-the-art distributed suffix array construction algorithm \PSAC{}~\cite{DBLP:conf/sc/FlickA15}.
Both algorithms are implemented in C++ and use MPI for interprocess communication.
Additionally, our implementation uses the (zero-overhead) MPI Wrapper KaMPIng~\cite{kamping-sc}.

\begin{figure}[t]
    \begin{center}
        \hspace*{-2.5cm}
        \scalebox{0.6}{\input{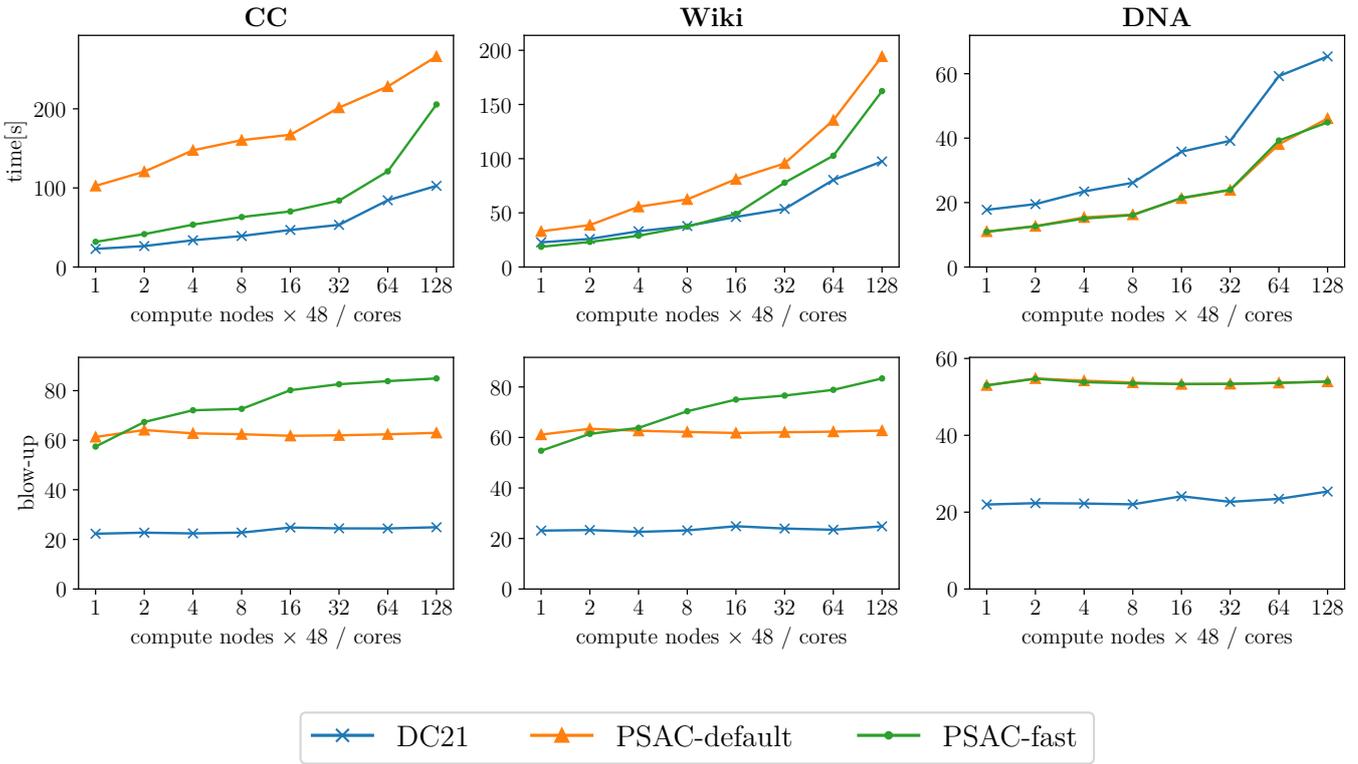}}
    \end{center}
    \begin{center}
        \scalebox{0.8}{
\begingroup%
\makeatletter%
\begin{pgfpicture}%
\pgfpathrectangle{\pgfpointorigin}{\pgfqpoint{5.200000in}{0.500000in}}%
\pgfusepath{use as bounding box, clip}%
\begin{pgfscope}%
\pgfsetbuttcap%
\pgfsetmiterjoin%
\definecolor{currentfill}{rgb}{1.000000,1.000000,1.000000}%
\pgfsetfillcolor{currentfill}%
\pgfsetlinewidth{0.000000pt}%
\definecolor{currentstroke}{rgb}{1.000000,1.000000,1.000000}%
\pgfsetstrokecolor{currentstroke}%
\pgfsetdash{}{0pt}%
\pgfpathmoveto{\pgfqpoint{0.000000in}{0.000000in}}%
\pgfpathlineto{\pgfqpoint{5.200000in}{0.000000in}}%
\pgfpathlineto{\pgfqpoint{5.200000in}{0.500000in}}%
\pgfpathlineto{\pgfqpoint{0.000000in}{0.500000in}}%
\pgfpathlineto{\pgfqpoint{0.000000in}{0.000000in}}%
\pgfpathclose%
\pgfusepath{fill}%
\end{pgfscope}%
\begin{pgfscope}%
\pgfsetbuttcap%
\pgfsetmiterjoin%
\definecolor{currentfill}{rgb}{1.000000,1.000000,1.000000}%
\pgfsetfillcolor{currentfill}%
\pgfsetfillopacity{0.800000}%
\pgfsetlinewidth{1.003750pt}%
\definecolor{currentstroke}{rgb}{0.800000,0.800000,0.800000}%
\pgfsetstrokecolor{currentstroke}%
\pgfsetstrokeopacity{0.800000}%
\pgfsetdash{}{0pt}%
\pgfpathmoveto{\pgfqpoint{0.069992in}{0.097222in}}%
\pgfpathlineto{\pgfqpoint{5.130008in}{0.097222in}}%
\pgfpathquadraticcurveto{\pgfqpoint{5.168897in}{0.097222in}}{\pgfqpoint{5.168897in}{0.136111in}}%
\pgfpathlineto{\pgfqpoint{5.168897in}{0.391666in}}%
\pgfpathquadraticcurveto{\pgfqpoint{5.168897in}{0.430555in}}{\pgfqpoint{5.130008in}{0.430555in}}%
\pgfpathlineto{\pgfqpoint{0.069992in}{0.430555in}}%
\pgfpathquadraticcurveto{\pgfqpoint{0.031103in}{0.430555in}}{\pgfqpoint{0.031103in}{0.391666in}}%
\pgfpathlineto{\pgfqpoint{0.031103in}{0.136111in}}%
\pgfpathquadraticcurveto{\pgfqpoint{0.031103in}{0.097222in}}{\pgfqpoint{0.069992in}{0.097222in}}%
\pgfpathlineto{\pgfqpoint{0.069992in}{0.097222in}}%
\pgfpathclose%
\pgfusepath{stroke,fill}%
\end{pgfscope}%
\begin{pgfscope}%
\pgfsetrectcap%
\pgfsetroundjoin%
\pgfsetlinewidth{1.505625pt}%
\definecolor{currentstroke}{rgb}{0.121569,0.466667,0.705882}%
\pgfsetstrokecolor{currentstroke}%
\pgfsetdash{}{0pt}%
\pgfpathmoveto{\pgfqpoint{0.108881in}{0.281944in}}%
\pgfpathlineto{\pgfqpoint{0.303326in}{0.281944in}}%
\pgfpathlineto{\pgfqpoint{0.497770in}{0.281944in}}%
\pgfusepath{stroke}%
\end{pgfscope}%
\begin{pgfscope}%
\pgfsetbuttcap%
\pgfsetroundjoin%
\definecolor{currentfill}{rgb}{0.121569,0.466667,0.705882}%
\pgfsetfillcolor{currentfill}%
\pgfsetlinewidth{1.003750pt}%
\definecolor{currentstroke}{rgb}{0.121569,0.466667,0.705882}%
\pgfsetstrokecolor{currentstroke}%
\pgfsetdash{}{0pt}%
\pgfsys@defobject{currentmarker}{\pgfqpoint{-0.041667in}{-0.041667in}}{\pgfqpoint{0.041667in}{0.041667in}}{%
\pgfpathmoveto{\pgfqpoint{-0.041667in}{-0.041667in}}%
\pgfpathlineto{\pgfqpoint{0.041667in}{0.041667in}}%
\pgfpathmoveto{\pgfqpoint{-0.041667in}{0.041667in}}%
\pgfpathlineto{\pgfqpoint{0.041667in}{-0.041667in}}%
\pgfusepath{stroke,fill}%
}%
\begin{pgfscope}%
\pgfsys@transformshift{0.303326in}{0.281944in}%
\pgfsys@useobject{currentmarker}{}%
\end{pgfscope}%
\end{pgfscope}%
\begin{pgfscope}%
\definecolor{textcolor}{rgb}{0.000000,0.000000,0.000000}%
\pgfsetstrokecolor{textcolor}%
\pgfsetfillcolor{textcolor}%
\pgftext[x=0.653326in,y=0.213889in,left,base]{\color{textcolor}\rmfamily\fontsize{14.000000}{16.800000}\selectfont DC21}%
\end{pgfscope}%
\begin{pgfscope}%
\pgfsetrectcap%
\pgfsetroundjoin%
\pgfsetlinewidth{1.505625pt}%
\definecolor{currentstroke}{rgb}{1.000000,0.498039,0.054902}%
\pgfsetstrokecolor{currentstroke}%
\pgfsetdash{}{0pt}%
\pgfpathmoveto{\pgfqpoint{1.529037in}{0.281944in}}%
\pgfpathlineto{\pgfqpoint{1.723482in}{0.281944in}}%
\pgfpathlineto{\pgfqpoint{1.917926in}{0.281944in}}%
\pgfusepath{stroke}%
\end{pgfscope}%
\begin{pgfscope}%
\pgfsetbuttcap%
\pgfsetmiterjoin%
\definecolor{currentfill}{rgb}{1.000000,0.498039,0.054902}%
\pgfsetfillcolor{currentfill}%
\pgfsetlinewidth{1.003750pt}%
\definecolor{currentstroke}{rgb}{1.000000,0.498039,0.054902}%
\pgfsetstrokecolor{currentstroke}%
\pgfsetdash{}{0pt}%
\pgfsys@defobject{currentmarker}{\pgfqpoint{-0.041667in}{-0.041667in}}{\pgfqpoint{0.041667in}{0.041667in}}{%
\pgfpathmoveto{\pgfqpoint{0.000000in}{0.041667in}}%
\pgfpathlineto{\pgfqpoint{-0.041667in}{-0.041667in}}%
\pgfpathlineto{\pgfqpoint{0.041667in}{-0.041667in}}%
\pgfpathlineto{\pgfqpoint{0.000000in}{0.041667in}}%
\pgfpathclose%
\pgfusepath{stroke,fill}%
}%
\begin{pgfscope}%
\pgfsys@transformshift{1.723482in}{0.281944in}%
\pgfsys@useobject{currentmarker}{}%
\end{pgfscope}%
\end{pgfscope}%
\begin{pgfscope}%
\definecolor{textcolor}{rgb}{0.000000,0.000000,0.000000}%
\pgfsetstrokecolor{textcolor}%
\pgfsetfillcolor{textcolor}%
\pgftext[x=2.073482in,y=0.213889in,left,base]{\color{textcolor}\rmfamily\fontsize{14.000000}{16.800000}\selectfont PSAC-default}%
\end{pgfscope}%
\begin{pgfscope}%
\pgfsetrectcap%
\pgfsetroundjoin%
\pgfsetlinewidth{1.505625pt}%
\definecolor{currentstroke}{rgb}{0.172549,0.627451,0.172549}%
\pgfsetstrokecolor{currentstroke}%
\pgfsetdash{}{0pt}%
\pgfpathmoveto{\pgfqpoint{3.645412in}{0.281944in}}%
\pgfpathlineto{\pgfqpoint{3.839856in}{0.281944in}}%
\pgfpathlineto{\pgfqpoint{4.034301in}{0.281944in}}%
\pgfusepath{stroke}%
\end{pgfscope}%
\begin{pgfscope}%
\pgfsetbuttcap%
\pgfsetroundjoin%
\definecolor{currentfill}{rgb}{0.172549,0.627451,0.172549}%
\pgfsetfillcolor{currentfill}%
\pgfsetlinewidth{1.003750pt}%
\definecolor{currentstroke}{rgb}{0.172549,0.627451,0.172549}%
\pgfsetstrokecolor{currentstroke}%
\pgfsetdash{}{0pt}%
\pgfsys@defobject{currentmarker}{\pgfqpoint{-0.020833in}{-0.020833in}}{\pgfqpoint{0.020833in}{0.020833in}}{%
\pgfpathmoveto{\pgfqpoint{0.000000in}{-0.020833in}}%
\pgfpathcurveto{\pgfqpoint{0.005525in}{-0.020833in}}{\pgfqpoint{0.010825in}{-0.018638in}}{\pgfqpoint{0.014731in}{-0.014731in}}%
\pgfpathcurveto{\pgfqpoint{0.018638in}{-0.010825in}}{\pgfqpoint{0.020833in}{-0.005525in}}{\pgfqpoint{0.020833in}{0.000000in}}%
\pgfpathcurveto{\pgfqpoint{0.020833in}{0.005525in}}{\pgfqpoint{0.018638in}{0.010825in}}{\pgfqpoint{0.014731in}{0.014731in}}%
\pgfpathcurveto{\pgfqpoint{0.010825in}{0.018638in}}{\pgfqpoint{0.005525in}{0.020833in}}{\pgfqpoint{0.000000in}{0.020833in}}%
\pgfpathcurveto{\pgfqpoint{-0.005525in}{0.020833in}}{\pgfqpoint{-0.010825in}{0.018638in}}{\pgfqpoint{-0.014731in}{0.014731in}}%
\pgfpathcurveto{\pgfqpoint{-0.018638in}{0.010825in}}{\pgfqpoint{-0.020833in}{0.005525in}}{\pgfqpoint{-0.020833in}{0.000000in}}%
\pgfpathcurveto{\pgfqpoint{-0.020833in}{-0.005525in}}{\pgfqpoint{-0.018638in}{-0.010825in}}{\pgfqpoint{-0.014731in}{-0.014731in}}%
\pgfpathcurveto{\pgfqpoint{-0.010825in}{-0.018638in}}{\pgfqpoint{-0.005525in}{-0.020833in}}{\pgfqpoint{0.000000in}{-0.020833in}}%
\pgfpathlineto{\pgfqpoint{0.000000in}{-0.020833in}}%
\pgfpathclose%
\pgfusepath{stroke,fill}%
}%
\begin{pgfscope}%
\pgfsys@transformshift{3.839856in}{0.281944in}%
\pgfsys@useobject{currentmarker}{}%
\end{pgfscope}%
\end{pgfscope}%
\begin{pgfscope}%
\definecolor{textcolor}{rgb}{0.000000,0.000000,0.000000}%
\pgfsetstrokecolor{textcolor}%
\pgfsetfillcolor{textcolor}%
\pgftext[x=4.189856in,y=0.213889in,left,base]{\color{textcolor}\rmfamily\fontsize{14.000000}{16.800000}\selectfont PSAC-fast}%
\end{pgfscope}%
\end{pgfpicture}%
\makeatother%
\endgroup
    \end{center}
    \caption{Running times and blow-up of the SACAs in our weak scaling experiments with 20MB per PE.}
    \label{fig:weak_scaling}
\end{figure}

\Cref{fig:weak_scaling} presents the running times and memory blow-up of weak scaling experiments with 20MB\footnote{Here, we are currently limited by the memory consumption of our competitor.}
of text data per PE (960 MB per compute note.
By blow-up, we refer to the maximum peak memory aggregated over each node divided by the total input size on a node.

We run \PSAC{} in two configurations. \PSACdefault{} is the standard (more memory-efficient) configuration proposed by the authors performing prefix-doubling without discarding initially and then switching to prefix-doubling with discarding in later iterations.
\PSACfast{} runs their prefix-doubling with discarding algorithm immediately.
Our variant outperforms both \PSAC{}-variants on \CC{} on all evaluated numbers of PEs, and is faster on \Wiki{} from 8 compute nodes on.
While \PSACfast{} is considerably faster than \PSACdefault{}, it also requires more memory.
However, on \DNA{}, both \PSAC{}-variants perform equally well and are faster than our \DCX{} implemenation.
Nevertheless, our \DCX{} implementation requires significantly less memory on all inputs.
Note, however, that we currently use 5-byte integers for rank information and indices in our implemenation.
Our competitor \PSAC{} uses 8-byte integers by default.
We were not able to easily replace them with $5$-byte integers, but we are planning to do so as part of future work to enable a more fair comparison.
In the future, we also plan to compare our algorithm with dedicated space-efficient distributed suffix-array construction algorithms~\cite{DBLP:conf/alenex/0001K19}.

In the following, we discuss some more details of our current implementation.

\subparagraph{Implementation Details.}
We use the IPS40 algorithm~\cite{axtmann2017} for local and AMS~\cite{DBLP:conf/spaa/AxtmannBS015, AxtmannSanders17} for distributed sorting.

Currently, we apply the bucketing technique for sorting the sample and non-sample suffixes in the third phase of the algorithm (see \cref{sec:distributed-dcx}),
with 32 buckets on the top level, and 8 buckets on the first recursion level  (for $X=21$). In subsequent recursion levels, the input is small enough such that sorting is not required.
Exploring general thresholds for the number of buckets depending on the input/machine configuration is part of our future work.
For larger $X$, it might be necessary to apply bucketing also for sorting the sample suffixes in the first phase.

To compare multiple characters of byte-alphabets (\CC{} and \Wiki{}), we pack 24 characters into 3 machine words ($64$ bits) for DC21.
Exploiting the small alphabet size of the \DNA{} dataset (3 bits per character), we pack 42 characters into 2 machine words, using less space and resulting in more unique sample ranks.
We are currently examining the best time/space tradeoffs for the packing heuristic.
As the alphabet size grows in the recursive calls of \DCX{}, packing is only used on the top level.

We are also experimenting with dedicated distributed string sorting algorithms, however, first preliminary experiments reveal, that AMS combined with packing tends to be slightly faster.
Furthermore, we are also exploring larger values for $X$.
Again, DC21/DC31 seem to have the best performance on the evaluated input instances.
However, this might be different for inputs with other characteristics.

The discarding optimization creates small overheads. Therefore, we use it only when there is sufficient reduction potential.

\section{Conclusion and Future Work}
In this work, we present initial algorithmic ideas on using a bucketing technique in conjunction with randomized chunking to develop a fast and space-efficient distributed suffix sorting algorithm.
Additionally, we provide first results of a preliminary implementation of our ideas.
We are currently working on improving our implementation, incorporating further optimizations, and extending our algorithm to the distributed external-memory model as outlined in \Cref{sec:external-memory}.
In addition, we also plan to look at distributed multi-GPU suffix sorting which could also benefit from our bucketing technique.
Furthermore, we want to explore the effects of low-latency \emph{multi}-level distributed (string) sorting.
This could be especially useful for small input sizes, where we want to compute the distributed suffix array with low latency, or when scaling our algorithms to (much) larger numbers of processors.

Our bucketing technique can also be employed to a generalization of distributed prefix doubling, where we do not double the investigated prefix length $h$ in each iteration
but increase it by a factor of $X$.
Therefore, we have to construct (and sort) a tuple containing $X$ ranks \cite{DBLP:journals/jea/DementievKMS08}.
However, in contrast to distributed \DCX{}, the information required for the construction of the tuples is not PE local.
Therefore, one has to query the rank entries twice per iteration --  once for determining into which bucket a suffix $j$ belongs and then for the actual bucketwise sorting.
Hence, it is not immediately clear whether this approach would yield a fast practical algorithm.
Orthogonally, one can reduce the memory consumption of distributed prefix doubling by using an \emph{in-place} alltoall exchange for exchanging rank information, which can be in-turn realized, e.g., by a bucketing approach. The same applies to the sorting step.

\label{sec:conclusion}

\bibliography{lipics-v2021-sample-article}

\begin{thebibliography}{10}

\bibitem{AbouelhodaKO2004EnhancedSA}
Mohamed~Ibrahim Abouelhoda, Stefan Kurtz, and Enno Ohlebusch.
\newblock Replacing suffix trees with enhanced suffix arrays.
\newblock {\em J. Discrete Algorithms}, 2(1):53--86, 2004.
\newblock \href {https://doi.org/10.1016/S1570-8667(03)00065-0}
  {\path{doi:10.1016/S1570-8667(03)00065-0}}.

\bibitem{AdjerohN2010SFESACA}
Donald~A. Adjeroh and Fei Nan.
\newblock Suffix-sorting \emph{via }shannon-fano-elias codes.
\newblock {\em Algorithms}, 3(2):145--167, 2010.
\newblock \href {https://doi.org/10.3390/A3020145}
  {\path{doi:10.3390/A3020145}}.

\bibitem{DBLP:conf/spaa/AxtmannBS015}
Michael Axtmann, Timo Bingmann, Peter Sanders, and Christian Schulz.
\newblock Practical massively parallel sorting.
\newblock In {\em {SPAA}}, pages 13--23. {ACM}, 2015.
\newblock \href {https://doi.org/10.1145/2755573.2755595}
  {\path{doi:10.1145/2755573.2755595}}.

\bibitem{AxtmannSanders17}
Michael Axtmann and Peter Sanders.
\newblock {\em Robust Massively Parallel Sorting}, pages 83--97.
\newblock URL: \url{https://epubs.siam.org/doi/abs/10.1137/1.9781611974768.7},
  \href
  {https://arxiv.org/abs/https://epubs.siam.org/doi/pdf/10.1137/1.9781611974768.7}
  {\path{arXiv:https://epubs.siam.org/doi/pdf/10.1137/1.9781611974768.7}},
  \href {https://doi.org/10.1137/1.9781611974768.7}
  {\path{doi:10.1137/1.9781611974768.7}}.

\bibitem{axtmann2017}
Michael Axtmann, Sascha Witt, Daniel Ferizovic, and Peter Sanders.
\newblock {In-Place Parallel Super Scalar Samplesort (IPSSSSo)}.
\newblock In {\em 25th Annual European Symposium on Algorithms (ESA 2017)},
  volume~87 of {\em Leibniz International Proceedings in Informatics (LIPIcs)},
  pages 9:1--9:14. Schloss Dagstuhl--Leibniz-Zentrum fuer Informatik, 2017.
\newblock \href {https://doi.org/10.4230/LIPIcs.ESA.2017.9}
  {\path{doi:10.4230/LIPIcs.ESA.2017.9}}.

\bibitem{BahneBBBFFGKLMP2019SACABench}
Johannes Bahne, Nico Bertram, Marvin B{\"{o}}cker, Jonas Bode, Johannes
  Fischer, Hermann Foot, Florian Grieskamp, Florian Kurpicz, Marvin
  L{\"{o}}bel, Oliver Magiera, Rosa Pink, David Piper, and Christopher Poeplau.
\newblock Sacabench: Benchmarking suffix array construction.
\newblock In {\em {SPIRE}}, volume 11811 of {\em Lecture Notes in Computer
  Science}, pages 407--416. Springer, 2019.
\newblock \href {https://doi.org/10.1007/978-3-030-32686-9_29}
  {\path{doi:10.1007/978-3-030-32686-9_29}}.

\bibitem{Baier2016GSACA}
Uwe Baier.
\newblock Linear-time suffix sorting - {A} new approach for suffix array
  construction.
\newblock In {\em {CPM}}, volume~54 of {\em LIPIcs}, pages 23:1--23:12. Schloss
  Dagstuhl - Leibniz-Zentrum f{\"{u}}r Informatik, 2016.
\newblock \href {https://doi.org/10.4230/LIPICS.CPM.2016.23}
  {\path{doi:10.4230/LIPICS.CPM.2016.23}}.

\bibitem{bingmann2018pdcx}
Timo Bingmann.
\newblock pdcx.
\newblock \url{https://github.com/bingmann/pDCX}, 2018.

\bibitem{DBLP:phd/dnb/Bingmann18}
Timo Bingmann.
\newblock {\em Scalable String and Suffix Sorting: Algorithms, Techniques, and
  Tools}.
\newblock PhD thesis, Karlsruhe Institute of Technology, Germany, 2018.

\bibitem{DBLP:books/sp/22/BingmannD0KO022}
Timo Bingmann, Patrick Dinklage, Johannes Fischer, Florian Kurpicz, Enno
  Ohlebusch, and Peter Sanders.
\newblock Scalable text index construction.
\newblock In {\em Algorithms for Big Data}, volume 13201 of {\em Lecture Notes
  in Computer Science}, pages 252--284. Springer, 2022.
\newblock \href {https://doi.org/10.1007/978-3-031-21534-6_14}
  {\path{doi:10.1007/978-3-031-21534-6_14}}.

\bibitem{DBLP:conf/bigdataconf/BingmannGK18}
Timo Bingmann, Simon Gog, and Florian Kurpicz.
\newblock Scalable construction of text indexes with thrill.
\newblock In {\em {IEEE} BigData}, pages 634--643. {IEEE}, 2018.
\newblock \href {https://doi.org/10.1109/BIGDATA.2018.8622171}
  {\path{doi:10.1109/BIGDATA.2018.8622171}}.

\bibitem{DBLP:books/daglib/0035704}
St{\'{e}}phane Boucheron, G{\'{a}}bor Lugosi, and Pascal Massart.
\newblock {\em Concentration Inequalities - {A} Nonasymptotic Theory of
  Independence}.
\newblock Oxford University Press, 2013.
\newblock \href {https://doi.org/10.1093/ACPROF:OSO/9780199535255.001.0001}
  {\path{doi:10.1093/ACPROF:OSO/9780199535255.001.0001}}.

\bibitem{BurkhardtK2003DiffCover}
Stefan Burkhardt and Juha K{\"{a}}rkk{\"{a}}inen.
\newblock Fast lightweight suffix array construction and checking.
\newblock In {\em CPM}, volume 2676 of {\em LNCS}, pages 55--69. Springer,
  2003.
\newblock \href {https://doi.org/10.1007/3-540-44888-8_5}
  {\path{doi:10.1007/3-540-44888-8_5}}.

\bibitem{BurrowsW1994BWT}
Michael Burrows and David~J. Wheeler.
\newblock A block-sorting lossless data compression algorithm.
\newblock Technical report, 1994.

\bibitem{Farach1997OptSuffixTree}
Martin Farach.
\newblock Optimal suffix tree construction with large alphabets.
\newblock In {\em FOCS}, pages 137--143. IEEE, 1997.
\newblock \href {https://doi.org/10.1109/SFCS.1997.646102}
  {\path{doi:10.1109/SFCS.1997.646102}}.

\bibitem{FerraginaM2000FMIndex}
Paolo Ferragina and Giovanni Manzini.
\newblock Opportunistic data structures with applications.
\newblock In {\em {FOCS}}, pages 390--398. {IEEE} Computer Society, 2000.
\newblock \href {https://doi.org/10.1109/SFCS.2000.892127}
  {\path{doi:10.1109/SFCS.2000.892127}}.

\bibitem{FischerK2017DismantlingDivSufSort}
Johannes Fischer and Florian Kurpicz.
\newblock Dismantling divsufsort.
\newblock In {\em Stringology}, pages 62--76. Department of Theoretical
  Computer Science, Faculty of Information Technology, Czech Technical
  University in Prague, 2017.

\bibitem{DBLP:conf/alenex/0001K19}
Johannes Fischer and Florian Kurpicz.
\newblock Lightweight distributed suffix array construction.
\newblock In {\em {ALENEX}}, pages 27--38. {SIAM}, 2019.
\newblock \href {https://doi.org/10.1137/1.9781611975499.3}
  {\path{doi:10.1137/1.9781611975499.3}}.

\bibitem{DBLP:conf/sc/FlickA15}
Patrick Flick and Srinivas Aluru.
\newblock Parallel distributed memory construction of suffix and longest common
  prefix arrays.
\newblock In {\em {SC}}, pages 16:1--16:10. {ACM}, 2015.
\newblock \href {https://doi.org/10.1145/2807591.2807609}
  {\path{doi:10.1145/2807591.2807609}}.

\bibitem{GagieNP2020FullyFunctionalRIndex}
Travis Gagie, Gonzalo Navarro, and Nicola Prezza.
\newblock Fully functional suffix trees and optimal text searching in bwt-runs
  bounded space.
\newblock {\em J. {ACM}}, 67(1):2:1--2:54, 2020.
\newblock \href {https://doi.org/10.1145/3375890} {\path{doi:10.1145/3375890}}.

\bibitem{GonnetBS1992PatArray}
Gaston~H. Gonnet, Ricardo~A. Baeza{-}Yates, and Tim Snider.
\newblock New indices for text: Pat trees and pat arrays.
\newblock In {\em Information Retrieval: Data Structures {\&} Algorithms},
  pages 66--82. Prentice-Hall, 1992.

\bibitem{Goto2019OptSACA}
Keisuke Goto.
\newblock Optimal time and space construction of suffix arrays and {LCP} arrays
  for integer alphabets.
\newblock In {\em Stringology}, pages 111--125. Czech Technical University in
  Prague, Faculty of Information Technology, Department of Theoretical Computer
  Science, 2019.

\bibitem{Goto2019ConstSpaceSACA}
Keisuke Goto.
\newblock Optimal time and space construction of suffix arrays and {LCP} arrays
  for integer alphabets.
\newblock In {\em Stringology}, pages 111--125. Czech Technical University in
  Prague, Faculty of Information Technology, Department of Theoretical Computer
  Science, 2019.

\bibitem{Grebnov2021libsais}
Ilya Grebnov.
\newblock libsais.
\newblock \url{https://github.com/IlyaGrebnov/libsais/}, 2021.

\bibitem{Hon2009Mod2SACA}
Wing{-}Kai Hon, Kunihiko Sadakane, and Wing{-}Kin Sung.
\newblock Breaking a time-and-space barrier in constructing full-text indices.
\newblock {\em {SIAM} J. Comput.}, 38(6):2162--2178, 2009.
\newblock \href {https://doi.org/10.1137/070685373}
  {\path{doi:10.1137/070685373}}.

\bibitem{ItohT1999ABCopySACA}
Hideo Itoh and Hozumi Tanaka.
\newblock An efficient method for in memory construction of suffix arrays.
\newblock In {\em SPIRE/CRIWG}, pages 81--88, 1999.
\newblock \href {https://doi.org/10.1109/SPIRE.1999.796581}
  {\path{doi:10.1109/SPIRE.1999.796581}}.

\bibitem{DBLP:journals/jacm/KarkkainenSB06}
Juha K{\"{a}}rkk{\"{a}}inen, Peter Sanders, and Stefan Burkhardt.
\newblock Linear work suffix array construction.
\newblock {\em J. {ACM}}, 53(6):918--936, 2006.
\newblock \href {https://doi.org/10.1145/1217856.1217858}
  {\path{doi:10.1145/1217856.1217858}}.

\bibitem{KimJP2004FixedSizeSACA}
Dong~Kyue Kim, Junha Jo, and Heejin Park.
\newblock A fast algorithm for constructing suffix arrays for fixed-size
  alphabets.
\newblock In {\em WEA}, volume 3059 of {\em LNCS}, pages 301--314. Springer,
  2004.
\newblock \href {https://doi.org/10.1007/978-3-540-24838-5\_23}
  {\path{doi:10.1007/978-3-540-24838-5\_23}}.

\bibitem{KimSPP2005LinearTimeSACA}
Dong~Kyue Kim, Jeong~Seop Sim, Heejin Park, and Kunsoo Park.
\newblock Constructing suffix arrays in linear time.
\newblock {\em J. Discrete Algorithms}, 3(2-4):126--142, 2005.
\newblock \href {https://doi.org/10.1016/J.JDA.2004.08.019}
  {\path{doi:10.1016/J.JDA.2004.08.019}}.

\bibitem{KoA2005SpaceEfficientSACA}
Pang Ko and Srinivas Aluru.
\newblock Space efficient linear time construction of suffix arrays.
\newblock {\em J. Discrete Algorithms}, 3(2-4):143--156, 2005.
\newblock \href {https://doi.org/10.1016/J.JDA.2004.08.002}
  {\path{doi:10.1016/J.JDA.2004.08.002}}.

\bibitem{DBLP:journals/pc/KullaS07}
Fabian Kulla and Peter Sanders.
\newblock Scalable parallel suffix array construction.
\newblock {\em Parallel Comput.}, 33(9):605--612, 2007.
\newblock \href {https://doi.org/10.1016/J.PARCO.2007.06.004}
  {\path{doi:10.1016/J.PARCO.2007.06.004}}.

\bibitem{DBLP:phd/dnb/Kurpicz20}
Florian Kurpicz.
\newblock {\em Parallel Text Index Construction}.
\newblock PhD thesis, Technical University of Dortmund, Germany, 2020.
\newblock \href {https://doi.org/http://dx.doi.org/10.17877/DE290R-21114}
  {\path{doi:http://dx.doi.org/10.17877/DE290R-21114}}.

\bibitem{DBLP:conf/esa/KurpiczM0S24}
Florian Kurpicz, Pascal Mehnert, Peter Sanders, and Matthias Schimek.
\newblock Scalable distributed string sorting.
\newblock In {\em {ESA}}, volume 308 of {\em LIPIcs}, pages 83:1--83:17.
  Schloss Dagstuhl - Leibniz-Zentrum f{\"{u}}r Informatik, 2024.
\newblock \href {https://doi.org/10.4230/LIPICS.ESA.2024.83}
  {\path{doi:10.4230/LIPICS.ESA.2024.83}}.

\bibitem{LarssonS2007FasterSuffixSorting}
N.~Jesper Larsson and Kunihiko Sadakane.
\newblock Faster suffix sorting.
\newblock {\em Theor. Comput. Sci.}, 387(3):258--272, 2007.
\newblock \href {https://doi.org/10.1016/J.TCS.2007.07.017}
  {\path{doi:10.1016/J.TCS.2007.07.017}}.

\bibitem{LiLH18OptSACA}
Zhize Li, Jian Li, and Hongwei Huo.
\newblock Optimal in-place suffix sorting.
\newblock In {\em {DCC}}, page 422. {IEEE}, 2018.
\newblock \href {https://doi.org/10.1109/DCC.2018.00075}
  {\path{doi:10.1109/DCC.2018.00075}}.

\bibitem{LiLH2022ConstSpaceSACA}
Zhize Li, Jian Li, and Hongwei Huo.
\newblock Optimal in-place suffix sorting.
\newblock {\em Inf. Comput.}, 285(Part):104818, 2022.
\newblock \href {https://doi.org/10.1016/J.IC.2021.104818}
  {\path{doi:10.1016/J.IC.2021.104818}}.

\bibitem{DBLP:conf/soda/ManberM90}
Udi Manber and Gene Myers.
\newblock Suffix arrays: {A} new method for on-line string searches.
\newblock In {\em {SODA}}, pages 319--327. {SIAM}, 1990.

\bibitem{ManiscalcoP2007CacheAwareSACA}
Michael~A. Maniscalco and Simon~J. Puglisi.
\newblock An efficient, versatile approach to suffix sorting.
\newblock 12:1.2:1--1.2:23, 2007.
\newblock \href {https://doi.org/10.1145/1227161.1278374}
  {\path{doi:10.1145/1227161.1278374}}.

\bibitem{Manzini2004}
Giovanni Manzini.
\newblock Two space saving tricks for linear time {LCP} array computation.
\newblock In {\em SWAT}, volume 3111 of {\em LNCS}, pages 372--383, 2004.
\newblock \href {https://doi.org/10.1007/978-3-540-27810-8\_32}
  {\path{doi:10.1007/978-3-540-27810-8\_32}}.

\bibitem{ManziniF2004Lightweight}
Giovanni Manzini and Paolo Ferragina.
\newblock Engineering a lightweight suffix array construction algorithm.
\newblock {\em Algorithmica}, 40(1):33--50, 2004.
\newblock \href {https://doi.org/10.1007/S00453-004-1094-1}
  {\path{doi:10.1007/S00453-004-1094-1}}.

\bibitem{PascalStringSorting}
Pascal Mehnert.
\newblock Scalable distributed string sorting algorithms.
\newblock Master's thesis, Karlsruhe Institute of Technology, Germany, 2024.

\bibitem{Mori2008SAIS}
Yuta Mori.
\newblock libdivsufsort.
\newblock \url{https://sites.google.com/site/yuta256/sais}, 2008.

\bibitem{Mori2015libdivsufsort}
Yuta Mori.
\newblock libdivsufsort.
\newblock \url{https://github.com/y-256/libdivsufsort}, 2015.

\bibitem{Na2005}
Joong~Chae Na.
\newblock Linear-time construction of compressed suffix arrays using o(n log
  n)-bit working space for large alphabets.
\newblock In {\em CPM}, volume 3537 of {\em LNCS}, pages 57--67. Springer,
  2005.

\bibitem{Nong2013ConstSpaceSACA}
Ge~Nong.
\newblock Practical linear-time {O}(1)-workspace suffix sorting for constant
  alphabets.
\newblock {\em {ACM} Trans. Inf. Syst.}, 31(3):15, 2013.
\newblock \href {https://doi.org/10.1145/2493175.2493180}
  {\path{doi:10.1145/2493175.2493180}}.

\bibitem{NongZ2007OptimalConstantAlphabetSACA}
Ge~Nong and Sen Zhang.
\newblock Optimal lightweight construction of suffix arrays for constant
  alphabets.
\newblock In {\em WADS}, volume 4619 of {\em LNCS}, pages 613--624. Springer,
  2007.
\newblock \href {https://doi.org/10.1007/978-3-540-73951-7\_53}
  {\path{doi:10.1007/978-3-540-73951-7\_53}}.

\bibitem{NongZH2011SADS}
Ge~Nong, Sen Zhang, and Wai~Hong Chan.
\newblock Two efficient algorithms for linear time suffix array construction.
\newblock {\em {IEEE} Trans. Computers}, 60(10):1471--1484, 2011.
\newblock \href {https://doi.org/10.1109/TC.2010.188}
  {\path{doi:10.1109/TC.2010.188}}.

\bibitem{Ohlebusch2013Bioinformatics}
Enno Ohlebusch.
\newblock {\em Bioinformatics Algorithms: Sequence Analysis, Genome
  Rearrangements, and Phylogenetic Reconstruction}.
\newblock Oldenbusch Verlag, 2013.

\bibitem{DBLP:journals/csur/PuglisiST07}
Simon~J. Puglisi, William~F. Smyth, and Andrew Turpin.
\newblock A taxonomy of suffix array construction algorithms.
\newblock {\em {ACM} Comput. Surv.}, 39(2):4, 2007.
\newblock \href {https://doi.org/10.1145/1242471.1242472}
  {\path{doi:10.1145/1242471.1242472}}.

\bibitem{SchurmannS2007IncomplexSACA}
Klaus{-}Bernd Sch{\"{u}}rmann and Jens Stoye.
\newblock An incomplex algorithm for fast suffix array construction.
\newblock {\em Software: Practice and Experience}, 37(3):309--329, 2007.
\newblock \href {https://doi.org/10.1002/SPE.768} {\path{doi:10.1002/SPE.768}}.

\bibitem{Seward2000BWTSACA}
Julian Seward.
\newblock On the performance of {BWT} sorting algorithms.
\newblock In {\em DCC}, pages 173--182, 2000.
\newblock \href {https://doi.org/10.1109/DCC.2000.838157}
  {\path{doi:10.1109/DCC.2000.838157}}.

\bibitem{Stephens2015BigDataGenomical}
Zachary~D Stephens., Skylar~Y. Lee, Faraz Faghri, Roy~H. Campbell, Chengxiang
  Zhai, Miles~J. Efron, Ravishankar Iyer, Michael~C. Schatz, Saurabh Sinha, and
  Gene~E. Robinson.
\newblock Big data: Astronomical or genomical?
\newblock {\em PLOS Biology}, 13(7):1--11, 07 2015.
\newblock \href {https://doi.org/10.1371/journal.pbio.1002195}
  {\path{doi:10.1371/journal.pbio.1002195}}.

\bibitem{kamping-sc}
Tim~Niklas Uhl, Matthias Schimek, Lukas H\"{u}bner, Demian Hespe, Florian
  Kurpicz, Daniel Seemaier, Christoph Stelz, and Peter Sanders.
\newblock Kamping: Flexible and (near) zero-overhead c++ bindings for mpi.
\newblock In {\em Proceedings of the International Conference for High
  Performance Computing, Networking, Storage, and Analysis}, SC '24. IEEE
  Press, 2024.
\newblock \href {https://doi.org/10.1109/SC41406.2024.00050}
  {\path{doi:10.1109/SC41406.2024.00050}}.

\end{thebibliography}
\end{document}